\title{Efficient Clifford+$T$ approximation of single-qubit operators}
\author{\begin{tabular}{c}
    Peter Selinger \\[.5ex]
    \normalsize Department of Mathematics and Statistics \\
    \normalsize Dalhousie University
  \end{tabular}
}
\date{}
\begin{document}
\maketitle

\begin{abstract}
  We give an efficient randomized algorithm for approximating an
  arbitrary element of $\SU(2)$ by a product of Clifford+$T$
  operators, up to any given error threshold $\epsilon>0$. Under a
  mild hypothesis on the distribution of primes, the algorithm's
  expected runtime is polynomial in $\log(1/\epsilon)$. If the
  operator to be approximated is a $z$-rotation, the resulting gate
  sequence has $T$-count $K+4\log_2(1/\epsilon)$, where $K$ is
  approximately equal to $10$. We also prove a worst-case lower bound
  of $K+4\log_2(1/\epsilon)$, where $K=-9$, so that our algorithm is
  within an additive constant of optimal for certain $z$-rotations.
  For an arbitrary member of $\SU(2)$, we achieve approximations with
  $T$-count $K+12\log_2(1/\epsilon)$.  By contrast, the
  Solovay-Kitaev algorithm achieves $T$-count $O(\log^c(1/\epsilon))$,
  where $c$ is approximately $3.97$.
\end{abstract}

%----------------------------------------------------------------------
\section{Introduction}

The decomposition of arbitrary unitary operators into gates from some
fixed universal set is a well-known problem in quantum information
theory. If the universal gate set is discrete, the decomposition of a
general operator can only be done approximately, up to a given
accuracy $\epsilon>0$. Here, we focus on the problem of approximating
single-qubit operators using the Clifford+$T$ universal gate set. The
Clifford+$T$ gate set is of particular interest because it is known to
be suitable for fault-tolerant quantum computation {\cite{ZLC2000}}.
Recall that the Clifford group on one qubit is generated by the
Hadamard gate $H$, the phase gate $S$, and the scalar
$\omega=e^{i\pi/4}$. It is well-known that one obtains a universal
gate set by adding the non-Clifford operator $T$.
\begin{equation}
  \omega = e^{i\pi/4}, \quad
  H = \frac{1}{\sqrt2}\zmatrix{cc}{1&1\\1&-1}, \quad
  S = \zmatrix{cc}{1&0\\0&i}, \quad
  T = \zmatrix{cc}{1&0\\0&e^{i\pi/4}}.
\end{equation}

We present an efficient randomized algorithm for approximating an
arbitrary element of $\SU(2)$ by a product of Clifford+$T$ operators,
up to any given error threshold $\epsilon>0$.  Under a mild hypothesis
on the distribution of primes, the algorithm's expected runtime is
polynomial in $\log(1/\epsilon)$. The algorithm approximates any
$z$-rotation with $T$-count{\footnote{We follow {\cite{AMMR12}} in
    using $T$-count, rather than the overall gate count, as a
    convenient measure for the length of a single-qubit Clifford+$T$
    circuit. This is justified, on the one hand, because the
    fault-tolerant implementation of $T$-gates is far more resource
    intensive than that of Clifford gates, and on the other hand,
    because consecutive Clifford gates can always be combined into a
    single Clifford gate, so that the overall gate count is almost
    exactly equal to twice the $T$-count.}}
$K+4\log_2(1/\epsilon)$, where $K$ is
approximately equal to $10$.{\footnote{Throughout the paper, we use
    $K$ to denote an additive constant; difference occurrences of $K$
    can denote different constants.}} We also prove a worst-case lower
bound of
$K+4\log_2(1/\epsilon)$, where $K=-9$, so that our algorithm is
within an additive constant of optimal for $z$-rotations.  For an
arbitrary member of $\SU(2)$, we achieve approximations with $T$-count
$K+12\log_2(1/\epsilon)$.

By contrast, the Solovay-Kitaev algorithm
{\cite{Kitaev97b,KSV2002,Dawson-Nielsen}}, which was until recently
the de facto standard algorithm for this problem, produces circuits of
$T$-count
$O(\log^c(1/\epsilon))$, where $c$ is approximately $3.97$.  Thus, we
have decreased the exponent in the gate complexity from $c=3.97$ to
$c=1$, which is optimal. Moreover, we have decreased the
multiplicative constant to the theoretical optimum, in the case of
worst-case $z$-rotations.

%----------------------------------------------------------------------
\subsection{Prior work}

Until recently, the state-of-the-art
solution to the approximation problem was the 1995 Solovay-Kitaev algorithm
{\cite{Kitaev97b,KSV2002,Dawson-Nielsen}}. There are various variants
of this algorithm. Perhaps the best-known variant is the one described
in {\cite{Dawson-Nielsen}}, which achieves a gate complexity of
$O(\log^c(1/\epsilon))$, for $c\approx 3.97$. Another well-known
variant is described in Kitaev et al.\ {\cite[Sec.~8.3]{KSV2002}}, and
achieves $c=3+\delta$, where $\delta>0$ is any positive real
number. Kitaev et al.\ also gave another algorithm that uses ancillas
and achieves gate complexity
$O(\log^2(1/\epsilon)\log\log(1/\epsilon))$
{\cite[Sec.~13.7]{KSV2002}}.

At the other end of the spectrum, there is a known
information-theoretic lower bound of $c=1$ for the exponent in the
gate complexity. One can make this lower bound more precise: in fact,
the decomposition of a typical $\SU(2)$ operator into the Clifford+$T$
gate set with accuracy $\epsilon$ requires $T$-count at least
$K+3\log_2(1/\epsilon)$. This follows from a
result of Matsumoto and Amano {\cite{MA08}}. See
Section~\ref{sec-lower-bounds} below for details. Heuristically, it
appears that, for most operators, this lower bound can in fact be
achieved by approximation algorithms based on exhaustive search, such
as Fowler's algorithm {\cite{Fowler04}}. However, the runtime of such
exhaustive search based algorithms is exponential in
$1/\epsilon$.

Recently, Duclos-Cianci and Svore announced an alternative to the
Solovay-Kitaev algorithm that requires ancillas to be prepared in
special resource states, using a state distillation procedure
{\cite{DCS12}}. Using this method, and dependent on the particular
setting, they reduced the gate complexity exponent $c$ to between
$1.12$ and $2.27$. Moreover, the resource states can be prepared
offline, and the expected online gate complexity per single-qubit
operation is constant.

Even more recently, in an important milestone, Kliuchnikov, Maslov,
and Mosca gave an approximation algorithm for single-qubit operators
that has polynomial running time and achieves gate counts of
$O(\log(1/\epsilon))$, thus reducing the gate complexity exponent to
$c=1$ {\cite{KMM-approx}}. The Kliuchnikov-Maslov-Mosca approximation
algorithm is therefore asymptotically optimal. It uses a fixed, small
number of ancilla qubits to approximate a given single-qubit
operator. Unlike approaches based on resource states, the ancillas in
the Kliuchnikov-Maslov-Mosca algorithm are initialized to $\ket{0}$,
and are returned in a state very close (but not exactly equal) to
$\ket{0}$; these ancillas do not require any special preparation
procedure.

For all practical purposes, this use of ancillas in the
Kliuchnikov-Maslov-Mosca algorithm causes no difficulties. However,
the question remained open whether there exists an asymptotically
optimal, efficient single-qubit approximation algorithm that requires
no ancillas, and thus solves exactly the same problem as the
Solovay-Kitaev algorithm, for the Clifford+$T$ gate set. The present
paper achieves such an algorithm.

%----------------------------------------------------------------------
\subsection{Limitations}

Unlike the Solovay-Kitaev algorithm, which can be applied to any
universal gate set, the algorithm of this paper is specialized to the
Clifford+$T$ gate set. While our number-theoretic approach has already
been generalized to specific other universal gate sets (see, e.g.,
{\cite{KBS13}}), it is unlikely that it would work for {\em arbitrary}
universal gate sets.  While this could be regarded as a limitation of
our method, in practice, the tables will likely be turned: any
universal gate set that does not permit an $O(\log(1/\epsilon))$
efficient synthesis algorithm will probably not be considered
practical in the future.

Technically, the expected polynomial runtime of our algorithm is
contingent on a number-theoretic assumption about the distribution of
primes, as stated below in Hypothesis~\ref{hyp-random}. While this
hypothesis appears to be heuristically true, it has not been proven to
the author's knowledge.

The lower bound of $K+4\log_2(1/\epsilon)$ for the $T$-count of
$\epsilon$-approximations to certain $z$-rotations only applies to the
problem as stated, i.e., for writing operators as a product of
single-qubit Clifford+$T$ operators. By the use of other techniques,
such as ancillas, resource states, or online measurement, even smaller
$T$-counts can be obtained; see, e.g., {\cite{WK13}} for recent
results along these lines.

%----------------------------------------------------------------------
\section{Overview of the algorithm}\label{sec-summary}

The algorithm can be summarized as follows. Consider the ring
\begin{equation}\label{eqn-domega}
  \D[\omega] = \Z[\frac{1}{\sqrt2},i] =
  \s{\rtt{k}(a\omega^3+b\omega^2+c\omega+d) \mid
    k\in\N; a,b,c,d\in\Z}.
\end{equation}
As shown by Kliuchnikov, Maslov, and Mosca {\cite{Kliuchnikov-etal}},
a unitary operator $U\in\U(2)$ can be exactly synthesized over the
Clifford+$T$ gate set (with no ancillas) if and only if all the matrix
entries belong to the ring $\D[\omega]$. Moreover, the required number
of $T$-gates is at most $2k$, where $k>0$ is the minimal exponent
required to write all entries of $U$ in the form mentioned in
(\ref{eqn-domega}).

Suppose we wish to approximate a given $z$-rotation
\begin{equation}
  R_z(\theta) = e^{-i\theta Z/2} = \zmatrix{cc}{e^{-i\theta/2} & 0 \\ 0 & e^{i\theta/2}}
\end{equation}
up to a given $\epsilon>0$, using Clifford+$T$ gates. We will choose
an integer $k$ and a randomized sequence of suitable elements
$u\in\Z[\omega]$ such that $\frac{u}{\rt{k}}\approx
e^{-i\theta/2}$. For each $u$, we attempt to solve the Diophantine
equation
\begin{equation}\label{eqn-dioph1}
  t\da t + u\da u = 2^k,
\end{equation}
with $t\in\Z[\omega]$. The parameters are chosen in such a way that
this succeeds for a relatively large proportion of the available
$u$. This yields a unitary matrix
\begin{equation}
  U = \rtt{k}\zmatrix{cc}{u & -t\da \\ t & u\da}
\end{equation}
with $\norm{U-R_z(\theta)}\leq \epsilon$, and such that the
coefficients of $U$ are in the ring $\D[\omega]$. By the
Kliuchnikov-Maslov-Mosca exact synthesis algorithm, $U$ can be exactly
decomposed into Clifford+$T$ gates with $T$-count at most $2k$. The
remainder of this paper fills in the details of these ideas: in
particular, how to choose $k$, how to find ``suitable'' $u$, and how
to solve the Diophantine equation (\ref{eqn-dioph1}) with high
probability.

%----------------------------------------------------------------------
\section{Some number theory}

In this section and the next one, we summarize some well-known facts
from algebraic number theory. The following exposition requires almost
no prerequisites, and will hopefully be of use to readers who are not
experts in number theory, or to those who wish to implement the
algorithm of this paper.

%----------------------------------------------------------------------
\subsection{Some rings of algebraic integers}

Recall that $\N$ is the set of natural numbers including 0. Let
$\omega=e^{i\pi/4} = (1+i)/\sqrt2$. Note that $\omega$ is an 8th
root of unity satisfying $\omega^2=i$ and $\omega^4=-1$. We will
consider the following rings:
\begin{itemize}
\item $\Z$, the ring of integers;
\item $\Z[\sqrt2] = \s{a+b\sqrt2 \mid a,b\in\Z}$, the ring of {\em
    quadratic integers with radicand 2};
\item $\Z[i] = \s{a+bi \mid a,b\in\Z}$, the ring of {\em Gaussian
    integers};
\item $\Z[\omega] = \s{a\omega^3+b\omega^2+c\omega+d \mid
    a,b,c,d\in\Z}$, the ring of {\em cyclotomic integers of degree 8}.
\end{itemize}

\begin{remark}
  We have the inclusions $\Z \seq \Z[\sqrt2] \seq \Z[\omega]$ and
  $\Z \seq \Z[i] \seq \Z[\omega]$. Of course, all four rings are
  subrings of the complex numbers.
\end{remark}

%----------------------------------------------------------------------
\subsection{Conjugate and norm}

\begin{definition}[Conjugation]
  Since $\omega$ is a root of the irreducible polynomial $x^4+1$, the
  ring $\Z[\omega]$ has four automorphisms. One of these is the usual
  {\em complex conjugation}, which we denote $(-)\da$. It maps $i$ to
  $-i$, and $\sqrt2$ to itself. Equivalently, it is given by
  $\omega\da=-\omega^3$. It acts trivially on $\Z$ and $\Z[\sqrt2]$,
  and non-trivially on $\Z[\omega]$ and $\Z[i]$, with the following
  explicit formulas, for real $a,b,c,d$:
  \begin{align}
    (a\omega^3+b\omega^2+c\omega+d)\da &= -c\omega^3-b\omega^2-a\omega+d,\\
    (a + bi)\da &= a - bi.
  \end{align}
  Another automorphism is {\em $\sqrt2$-conjugation}, which we denote
  $(-)\bul$. It maps $\sqrt2$ to $-\sqrt2$, and $i$ to
  itself. Equivalently, $\omega\bul=-\omega$. It acts trivially on $\Z$
  and $\Z[i]$, and non-trivially on $\Z[\omega]$ and $\Z[\sqrt2]$,
  with the following explicit formulas, for rational $a,b,c,d$:
  \begin{align}
    (a\omega^3+b\omega^2+c\omega+d)\bul &= -a\omega^3+b\omega^2-c\omega+d,\\
    (a + b\sqrt2)\bul &= a - b\sqrt2.
  \end{align}
  The remaining two automorphisms are, of course, the identity
  function and $(-)\dabul=(-)\bulda$.
\end{definition}

\begin{remark}\label{rem-z-omega-subrings}
  For $t\in\Z[\omega]$, we have $t\in\Z[\sqrt2]$ iff $t=t\da$,
  $t\in\Z[i]$ iff $t=t\bul$, and $t\in\Z$ iff $t=t\da$ and $t=t\bul$.
\end{remark}

\begin{definition}[Norms]
  We define an integer-valued (number-theoretic) {\em norm} on each ring:
  \begin{itemize}
  \item For $t=a+bi\in\Z[i]$, let
    \begin{equation}\label{eqn-iN}
      \iN(t) = t\da t = a^2 + b^2.
    \end{equation}
  \item For $t=a+b\sqrt2\in\Z[\sqrt2]$, let
    \begin{equation}\label{eqn-rN}
      \rN(t) = t\bul t = a^2 - 2b^2.
    \end{equation}
  \item For $t=a\omega^3+b\omega^2+c\omega+d\in\Z[\omega]$, let
    \begin{equation}\label{eqn-yN}
      \yN(t) = (t\da t)\bul (t\da t) = (a^2+b^2+c^2+d^2)^2-2(ab+bc+cd-da)^2.
    \end{equation}
  \end{itemize}
  For consistency, we also define $\zN(t) = t$ for $t\in\Z$.
\end{definition}

\begin{remark}\label{rem-norm}
  $\iN$ and $\yN$ are valued in the non-negative integers, but $\zN$
  and $\rN$ may take negative values. Each norm is multiplicative, in
  the sense that $\nN(ts)=\nN(t)\nN(s)$ for all $t,s$. Moreover,
  $\nN(t)=0$ if and only if $t=0$, and $\nN(t)=\pm 1$ if and only if
  $t$ is a unit (i.e., an invertible element) in the ring. The latter
  property follows easily from multiplicativity and the fact that
  equations (\ref{eqn-iN}--\ref{eqn-yN}) define an inverse for $t$
  when $\nN(t)=\pm 1$.
\end{remark}

%----------------------------------------------------------------------
\subsection{Euclidean domains}

\begin{remark}\label{rem-euclid}
  $\Z$, $\Z[\sqrt2]$, $\Z[i]$, and $\Z[\omega]$ are Euclidean
  domains. Explicitly, for each of these rings, a Euclidean function
  is given by $|\nN(t)|$. For given elements $s$ and $t\neq 0$, the
  division of $s$ by $t$ with quotient $q$ and remainder $r$ can be
  defined by first calculating $q'=s/t$ in the corresponding field of
  fractions (respectively $\Q$, $\Q[\sqrt2]$, $\Q[i]$, and
  $\Q[\omega]$). $q'$ can be expressed with rational coefficients as
  $a$, $a+b\sqrt2$, $a+bi$, or $\omega^3a+\omega^2b+\omega c+d$,
  respectively. Then $q$ is obtained from $q'$ by rounding each
  coefficient $a,b,c,d$ to the closest integer, and $r$ is defined to
  be $qt-s$. One may verify that in each case, $|\nN(r)|\leq
  \frac{9}{16}|\nN(t)|$.
\end{remark}

As usual, we write $t\divides s$ to mean that $t$ is a divisor of $s$,
i.e., that there exists some $r$ such that $rt=s$.  We also write
$t\sim s$ to indicate that $t\divides s$ and $s\divides t$; in this
case, $t$ and $s$ differ only by a unit of the ring, and we say $t$
and $s$ are {\em associates}.

\begin{remark}\label{rem-divisibility} 
  We note that if $R$ is one of the four rings $\Z$, $\Z[\sqrt2]$,
  $\Z[i]$, and $\Z[\omega]$, and $rt=s$ for some $s, t\in R$ and
  $r\in\Z[\omega]$ with $t\neq 0$, then $r\in R$. This is easily
  proved by letting $\overline R$ be the corresponding field of
  fractions (respectively $\Q$, $\Q[\sqrt2]$, $\Q[i]$, and
  $\Q[\omega]$), and noting on the one hand that
  $r=\frac{s}{t}\in\overline R$, and on the other hand that
  $R=\Z[\omega]\cap \overline R$. The latter property follows from
  Remark~\ref{rem-z-omega-subrings}.

  In particular, it follows that $t\divides s$ holds in $R$ if and only if
  $t\divides s$ holds in $\Z[\omega]$, so that the notion of divisibility is
  independent of $R$.
\end{remark}

\begin{remark}\label{rem-efficient-gcd}
  Every Euclidean domain admits greatest common divisors, which can be
  calculated by repeated divisions with remainder via Euclid's
  algorithm.  Note that greatest common divisors are only unique up to
  a unit of the ring. Also note that, since each division by $t$ with
  remainder $r$ satisfies $|\nN(r)|\leq \frac{9}{16}|\nN(t)|$, the
  computation of the greatest common divisor of two elements of any of
  the above rings requires at most $O(\log |\nN(t)|)$ divisions with
  remainder.
\end{remark}

An element $t$ of a Euclidean domain is called {\em prime} (or {\em
  irreducible}) if $t$ is not a unit, and for all $r,s$ with $rs=t$,
either $r$ or $s$ is a unit. We note that the notion of primality is
not independent of the ring. For example, $7$ is prime in $\Z$, but
not in $\Z[\sqrt2]$, as $7=(3+\sqrt2)(3-\sqrt2)$.

\begin{remark}\label{rem-prime}
  In each of the above rings, if $\nN(t)$ is prime in $\Z$, then $t$
  is prime in the ring. Indeed, suppose $t=rs$. Since
  $\nN(t)=\nN(r)\nN(s)$, either $\nN(r)$ or $\nN(s)$ is $\pm1$, hence
  $r$ or $s$ is a unit by Remark~\ref{rem-norm}.
\end{remark}

%----------------------------------------------------------------------
\subsection[{Units in Z[root 2]}]{Units in $\Z[\sqrt2]$}

\begin{lemma}\label{lem-units}
  The units of $\Z[\sqrt2]$ are exactly the elements of the form
  $(-1)^n(\sqrt2-1)^k$, where $n,k\in\Z$. A unit $u$ is a square if
  and only if $u\geq 0$ and $u\bul\geq 0$.
\end{lemma}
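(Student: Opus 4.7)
The plan is to prove the two claims in sequence, using the norm $\rN(a+b\sqrt 2)=a^2-2b^2$ and the characterization that units are precisely elements of norm $\pm 1$ (Remark~\ref{rem-norm}).

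For the first claim, I would first verify that $\lambda:=\sqrt 2-1$ is a unit, since $(\sqrt 2-1)(\sqrt 2+1)=1$, and that $-1$ is obviously a unit. So every $(-1)^n\lambda^k$ is a unit. For the converse, I would set up a reduction to a fundamental domain. Since $\lambda>0$ and $\mu:=\lambda\inv=\sqrt 2+1>1$, for any unit $u>0$ there is a unique $n\in\Z$ with $\mu^n\leq u<\mu^{n+1}$, so that $v:=u\lambda^n\in[1,\mu)$ is again a unit. It then suffices to show that the only unit $v\in[1,\sqrt 2+1)$ is $v=1$. Writing $v=a+b\sqrt 2$, the key observation is that $\rN(v)=\pm 1$ forces $v\bul=\pm 1/v$, so $|v\bul|\leq 1$. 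Hence both $2a=v+v\bul$ and $2b\sqrt 2=v-v\bul$ are bounded in absolute value by $\sqrt 2+2<4$, reducing to the four cases $a,b\in\{0,1\}$. A direct check rules out everything except $v=1$ (the case $a=b=1$ gives exactly the boundary value $v=\sqrt 2+1$, which is excluded). Allowing for the sign, every unit is $\pm\mu^n=\pm\lambda^{-n}$, which is the claimed form after reindexing.

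For the second claim (characterizing squares), one direction is immediate: if $u=w^2$ then $u\geq 0$ (as $w\in\R$) and $u\bul=(w\bul)^2\geq 0$. For the converse, suppose $u$ is a unit with $u\geq 0$ and $u\bul\geq 0$; since units are nonzero, both inequalities are strict. Write $u=(-1)^n\lambda^k$ using the first part. Since $\lambda>0$ and $u>0$, the factor $(-1)^n$ must be $+1$, so $u=\lambda^k$. Since $\lambda\bul=-\sqrt 2-1<0$, the condition $u\bul=(\lambda\bul)^k>0$ forces $k$ to be even, say $k=2m$, and then $u=(\lambda^m)^2$ is a square.

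The main obstacle is the reduction step in Part 1: one must argue that the half-open interval $[1,\sqrt 2+1)$ really is a fundamental domain for the action of the $\lambda$-multiplications on positive units and then carefully exclude the boundary case $1+\sqrt 2$. Once that argument is set up, the bound on $|v\bul|$ via the norm relation makes the finite case analysis trivial, and the squaring criterion follows from parity bookkeeping alone.
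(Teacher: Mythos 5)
Your proof is correct, but the first claim is handled by a genuinely different route than the paper's. The paper argues by induction on $|b|$ for a unit $u=a+b\sqrt2$: from $a^2-2b^2=\pm1$ it deduces $0<|a|<2|b|$ and then multiplies or divides by $\sqrt2-1$ to strictly decrease $|b|$, reaching the base case $b=0$. You instead use the classical fundamental-domain argument: reduce a positive unit into $[1,\sqrt2+1)$ by multiplying by a power of $\lambda=\sqrt2-1$, then show $1$ is the only unit in that interval via the bound $|v\bul|=1/v\leq 1$ forced by $\rN(v)=\pm1$. Both are valid descent arguments; the paper's stays entirely within integer sign/size comparisons, while yours trades the induction for a single finite check and is the argument that generalizes most directly to other real quadratic rings (it exhibits $\sqrt2+1$ as the fundamental unit). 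One small point in your finite check: the bound $|v+v\bul|,|v-v\bul|<\sqrt2+2$ by itself only gives $a,b\in\{-1,0,1\}$; to cut to the four cases $a,b\in\{0,1\}$ you should note that $v\geq 1\geq|v\bul|$ gives $2a=v+v\bul\geq 0$ and $2b\sqrt2=v-v\bul\geq 0$ — immediate, but worth a line. Your treatment of the second claim (squares) is essentially the same parity bookkeeping as the paper's, with the only difference that you derive ``$u$ square $\Rightarrow u\geq0$ and $u\bul\geq 0$'' directly from the automorphism property rather than reading both directions off the parity of $n$ and $k$.
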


\begin{proof}
  We first note that $(\sqrt2-1)(\sqrt2+1)=1$, so $\sqrt2-1$ and
  $\sqrt2+1$ are units. Now consider any unit
  $u=a+b\sqrt2\in\Z[\sqrt2]$. We prove the first claim by induction on
  $|b|$. The base case is $b=0$; in this case, $\rN(u)=a^2=\pm1$
  implies $u=a=\pm1$. For the induction step, note that
  $\rN(u)=a^2-2b^2=\pm1$ implies $a\neq 0$ and $a^2<2b^2+2\leq 4b^2$,
  hence $0 < |a| < |2b|$.  First consider the case where $a,b$ have
  the same sign. Then $|a-b| < |b|$, and the claim is proved by
  applying the induction hypothesis to $\uhat=u(\sqrt2-1) = 2b-a +
  (a-b)\sqrt2$. The case where $a,b$ have opposite signs is similar,
  except we use $|a+b|<|b|$ to apply the induction hypothesis to
  $\uhat=u/(\sqrt2-1) = a+2b + (a+b)\sqrt2$.

  For the second claim, note that $u=(-1)^n(\sqrt2-1)^k$ satisfies
  $u\geq 0$ iff $n$ is even, and $u\bul\geq 0$ iff $n+k$ is
  even. Therefore $u$ is a square iff $n$ and $k$ are both even, iff
  $u\geq 0$ and $u\bul\geq 0$.
\end{proof}

%----------------------------------------------------------------------
\subsection{Roots of $-1$ in $\Z/(p)$}

\begin{remark}\label{rem-h}
  Let $p\in\Z$ be a positive prime satisfying $p\equiv 1\,(\mymod 4)$.
  It is well-known that there exists $h\in\Z$ such that
  $h^2+1\equiv0\mmod{p}$. We recall that there is an efficient randomized
  algorithm for computing $h$. Consider the field $\Z/(p)$ of integers
  modulo $p$. By Fermat's Little Theorem, all non-zero $b\in\Z/(p)$
  satisfy $b^{p-1}=1$, hence $b^{(p-1)/2}=\pm1$. Because each of the
  polynomial equations $b^{(p-1)/2}=1$ and $b^{(p-1)/2}=-1$ has at
  most $(p-1)/2$ solutions, $b^{(p-1)/2}=-1$ holds for exactly half of
  the non-zero $b\in\Z/(p)$. Therefore, one can efficiently solve
  $b^{(p-1)/2}=-1$ by picking $b$ at random until a solution is found;
  on average, this will require two attempts.  Note that, by the
  method of repeated squaring, the computation of $b^{(p-1)/2}$ only
  requires $O(\log p)$ multiplications. Finally, we can set
  $h=b^{(p-1)/4}$.
\end{remark}

%----------------------------------------------------------------------
\section{A Diophantine equation}\label{sec-dioph}

We will be interested in solving equations of the form
\begin{equation}\label{eqn-xi}
  t\da t = \xi,
\end{equation}
where $\xi\in\Z[\sqrt2]$ is given and $t\in\Z[\omega]$ is
unknown. Number theorists will recognize this as a relative norm
equation, which can be solved by splitting fully split primes in
$\Z[\omega]$ {\cite{Cohen}}. In the interest of self-containedness,
and to aid in the complexity analysis of Section~\ref{sec-complexity},
we describe a method for solving equation (\ref{eqn-xi}) in detail.

Writing $\xi=x+y\sqrt2$ and $t=a\omega^3+b\omega^2+c\omega+d$, we can
equivalently express (\ref{eqn-xi}) as a pair of integer equations:
\begin{align}
  a^2+b^2+c^2+d^2  &~=~ x\\
  ab+bc+cd-da &~=~ y.\label{eqn-abcdy}
\end{align}
Of course, not every $\xi\in\Z[\sqrt2]$ can be expressed in the form
(\ref{eqn-xi}). However, we have the following:

\begin{theorem}\label{thm-dioph}
  Let $\xi=x+y\sqrt2\in\Z[\sqrt2]$, where $x$ is odd, $y$ is even,
  $\xi\geq 0$, $\xi\bul\geq 0$, and $p=\xi\bul\xi=x^2-2y^2$ is prime in
  $\Z$. Then there exists $t\in\Z[\omega]$ satisfying
  (\ref{eqn-xi}). Moreover, there is an efficient randomized algorithm
  for computing $t$.
\end{theorem}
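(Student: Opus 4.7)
The plan is to factor the prime $p = \xi\xi\bul$ successively in the chain $\Z \seq \Z[i] \seq \Z[\omega]$, recovering $t$ as a ``quarter'' of $p$. Since $x$ is odd and $y$ is even, $x^2\equiv 1$ and $2y^2\equiv 0\mmod{8}$, so $p\equiv 1\mmod{8}$, and in particular $p\equiv 1\mmod{4}$. By Remark~\ref{rem-h} I produce some $h\in\Z$ with $h^2+1\equiv 0\mmod{p}$ in expected $O(\log p)$ operations. I then compute $u := \gcd(p,h+i)$ in $\Z[i]$ by Euclid's algorithm (Remark~\ref{rem-efficient-gcd}). The resulting $u$ satisfies $u\da u = p$, since $\iN(u)$ divides $\iN(p) = p^2$ but the values $1$ and $p^2$ are both ruled out by Bezout---either would force $p \divides h\pm i$ and hence $p \divides 1$.

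Next I perform a second Euclidean gcd, this time in $\Z[\omega]$: let $s := \gcd_{\Z[\omega]}(u, \xi)$. Both $u$ and $\xi$ have $\yN$-norm equal to $p^2$, so $\yN(s)\in\{1,p,p^2\}$. The value $1$ is excluded because $\xi\divides p = uu\da$: if $\gcd(\xi,u)=1$ then $\xi\divides u\da$, and applying $\da$ (which fixes $\xi$) gives $\xi\divides u$, a contradiction. The value $p^2$ is excluded too: $u \sim \xi$ in $\Z[\omega]$ would mean $u = v\xi$ for some unit $v$, whence $v\da v = u\da u/(\xi\da\xi) = p/\xi^2$ is a unit of $\Z[\sqrt 2]$, forcing $\xi\bul \sim \xi$ in $\Z[\sqrt 2]$ and hence $p \divides (\xi\bul)^2 = (x^2+2y^2)-2xy\sqrt 2$; comparing the rational and irrational parts, and combining with $p = x^2-2y^2$ and the fact that $p$ is odd, yields $p \divides x$ and $p \divides y$, contradicting $p = x^2-2y^2$. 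Hence $\yN(s)=p$.

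Consequently $s\da s \in \Z[\sqrt 2]$ satisfies $(s\da s)(s\da s)\bul = p$, so $s\da s$ is prime in $\Z[\sqrt 2]$ (Remark~\ref{rem-prime}). Since $s\divides \xi$ in $\Z[\omega]$ implies $s\da\divides \xi$, we have $s\da s\divides \xi^2$ in $\Z[\omega]$, and hence in $\Z[\sqrt 2]$ (Remark~\ref{rem-divisibility}); as $\xi$ is the only prime dividing $\xi^2$ up to units, I conclude $s\da s = v\xi$ for some unit $v\in\Z[\sqrt 2]$. The nonnegativity of $s\da s = |s|^2$ and $(s\da s)\bul = |s\bul|^2$, together with the hypotheses $\xi,\xi\bul \geq 0$, gives $v\geq 0$ and $v\bul\geq 0$; Lemma~\ref{lem-units} therefore supplies a unit $w\in\Z[\sqrt 2]$ with $v=w^2$, which I extract in polynomial time by writing $v$ as a power of $(\sqrt 2-1)^2$ via repeated division. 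Setting $t := w\inv s \in \Z[\omega]$ then yields $t\da t = w^{-2}(v\xi) = \xi$, and every step runs in time polynomial in $\log p$.

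The main obstacle is the case analysis in the second paragraph---in particular ruling out $u\sim\xi$ in $\Z[\omega]$---which is really the assertion that $p$ splits completely into four distinct primes $s,s\da,s\bul,s\dabul$ in $\Z[\omega]$, a phenomenon tied to the hypothesis $p\equiv 1\mmod{8}$. Once this is in hand, the rest reduces to routine Euclidean divisions and manipulation of $\Z[\sqrt 2]$-units, all handled by earlier results of this section.
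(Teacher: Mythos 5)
Your proof is correct and follows essentially the same route as the paper's: find $h$ with $h^2\equiv-1\pmod p$, split off a factor $s$ of $\xi$ by a Euclidean gcd in $\Z[\omega]$, rule out the degenerate cases (including $\xi\sim\xi\bul$) so that $s\da s\sim\xi$, and absorb the remaining unit as a square via Lemma~\ref{lem-units}. The only minor difference is that you interpose the Gaussian factorization $u=\gcd_{\Z[i]}(p,h+i)$ with $u\da u=p$ and then take $\gcd_{\Z[\omega]}(u,\xi)$, running the case analysis on $\yN(s)\in\{1,p,p^2\}$, whereas the paper takes $s=\gcd_{\Z[\omega]}(h+i,\xi)$ directly and analyzes the divisors of $\xi^2$; both are valid and computationally equivalent.
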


\begin{remark}\label{rem-positive-new}
  Since $t\da t=\xi$ implies $t\bulda t\bul = \xi\bul$, the conditions
  $\xi\geq 0$ and  $\xi\bul\geq 0$ are both obviously necessary for
  (\ref{eqn-xi}) to have a solution.
\end{remark}

\begin{proof}[Proof of Theorem~\ref{thm-dioph}]
  Note that $p$ is prime by assumption. Since $\xi\geq 0$ and
  $\xi\bul\geq 0$, we know that $p\geq 0$. Since $x$ is odd and $y$ is
  even, we have $p\equiv 1\,(\mymod 4)$. Therefore, by
  Remark~{\ref{rem-h}}, we can find $h\in\Z$ with $p\divides
  h^2+1$. Moreover, in $\Z[\sqrt2]$, we have $\xi\divides p$ and
  therefore $\xi\divides h^2+1$.  Let $s=\gcd(h+i,\xi)$ in the ring
  $\Z[\omega]$. We claim that $s\da s\sim\xi$.

  First, note that $\xi$ is prime in the ring $\Z[\sqrt2]$ by
  Remark~\ref{rem-prime}. By definition of $s$, we know that
  $s\divides \xi$. But $\xi$ is real, and therefore also $s\da\divides
  \xi$. It follows that $s\da s\divides \xi^2$ in $\Z[\omega]$. By
  Remark~\ref{rem-divisibility}, $s\da s\divides \xi^2$ in
  $\Z[\sqrt2]$. Since $\xi$ is prime in $\Z[\sqrt2]$, it follows that
  one of three cases holds:
  \begin{enumerate}\alphalabels
  \item $s\da s\sim 1$. But this is impossible. Indeed, in
    this case, $s$ is a unit, so $h+i$ and $\xi$ are relatively
    prime. As $\xi$ is real, it follows that $h-i$ and $\xi$ are also
    relatively prime, hence $(h+i)(h-i)$ is relatively prime to $\xi$,
    contradicting $\xi\divides h^2+1$.
  \item $s\da s\sim\xi$. This is what was claimed.
  \item $s\da s\sim\xi^2$. This is also impossible. Indeed, in this
    case, $\yN(s)=\rN(s\da s)=\rN(\xi^2)=\yN(\xi)$. But we also have
    $s\divides \xi$, so $\xi=us$ for some $u\in\Z[\omega]$. Therefore
    $\yN(u)=\yN(\xi)/\yN(s)=1$, hence $u$ is a unit in $\Z[\omega]$,
    thus $s\sim\xi$. Since, by definition, $s\divides h+i$, we have
    $\xi\divides h+i$. Since $h$ is an integer, we have $h=h\bul$, and
    thus $\xi\bul\divides h+i$. We note that $\xi\not\sim\xi\bul$, for
    otherwise, we would have $p=\xi\bul\xi\divides \xi^2+{\xi\bul}^2 =
    2x^2+4y^2=4x^2-2p$; since $p$ is an odd prime, this implies
    $p\divides x$, and from $p=x^2-2y^2$, we get $p\divides y$, hence
    $p^2\divides x^2-2y^2= p$, a contradiction. Therefore,
    $\xi$ and $\xi\bul$ are non-associate primes in $\Z[\sqrt2]$, both
    dividing $h+i$, which implies $p=\xi\bul\xi\divides h+i$, an
    absurdity since $\frac{1}{p}(h+i)\not\in\Z[\omega]$.
  \end{enumerate}
  We have shown the existence of $s\in\Z[\omega]$ such that
  $s\da s\sim\xi$ in $\Z[\sqrt2]$. Let $u$ be a unit of $\Z[\sqrt2]$
  such that $us\da s=\xi$. Our next claim is that $u$ is a square in
  $\Z[\sqrt2]$. First note that, by the usual properties of complex
  numbers, $s\da s\geq 0$ and $(s\da s)\bul = (s\bul)\da(s\bul)\geq
  0$. Also, $\xi\geq 0$ and $\xi\bul\geq 0$ by assumption.  It follows
  that $u\geq 0$ and $u\bul\geq 0$. But then $u$ is a square by
  Lemma~\ref{lem-units}. Let $v\in\Z[\sqrt2]$ with $v^2=u$, and let
  $t=vs$. Noting that $v=v\da$, we have $t\da t=v^2 s\da s = \xi$, as
  desired.
\end{proof}

\begin{remark}
  We note that the proof of Theorem~\ref{thm-dioph} immediately yields
  an algorithm for computing $t$. As a matter of fact, the only
  randomized aspect is the computation of $h$; the remainder consists
  of arithmetic calculations in the various rings, and can be done
  deterministically. The computation of $s$ requires taking a greatest
  common divisor in $\Z[\omega]$, which can be done efficiently by
  Remark~\ref{rem-efficient-gcd}. The computation of $u$ requires a
  simple division in $\Z[\sqrt2]$, and the computation of $v$ requires
  taking a square root in $\Z[\sqrt2]$, which easily reduces to
  solving a quadratic equation in the integers. Finally, the
  computation of $t$ requires a multiplication in $\Z[\omega]$.
\end{remark}

\begin{remark}
  Theorem~\ref{thm-dioph} is analogous to a well-known theorem about
  the integers, stating that every positive prime satisfying
  $p\equiv 1\,(\mymod 4)$ can be written as a sum of two squares
  $p=a^2+b^2$, or equivalently, that the equation $z\da z = p$ has a solution
  $z=a+bi\in\Z[i]$. A randomized algorithm for computing $z$ was
  described, for example, by Rabin and Shallit {\cite{Rabin-Shallit}}.
  Our proof and algorithm follows the same general idea, applied to a
  different pair of Euclidean rings.
\end{remark}

%----------------------------------------------------------------------
\section[{Approximations in Z[root 2]}]{Approximations in $\Z[\sqrt2]$}

It is well-known that the set $\Z[\sqrt2]$ of integers of the form
$\alpha=a+b\sqrt2$ is a dense subset of the real numbers. Here, density is of
course understood with respect to the Euclidean distance
$|\alpha-\beta|$. We note that the Euclidean distance is not at all
preserved by $\sqrt2$-conjugation; in fact, as we will see in
Lemma~\ref{lem-interval-lower} below, unless $\alpha=\beta$, it is
impossible for $|\alpha-\beta|$ and $|\alpha\bul-\beta\bul|$ to be small
at the same time.

The purpose of this section is to find solutions in $\Z[\sqrt2]$ to
constraints involving $\alpha$ and $\alpha\bul$
simultaneously. Specifically, we will be interested in solving
problems of the form
\begin{equation}\label{eqn-approx-interval}
  a+b\sqrt2 \in [x_0,x_1]\quad\mbox{and}\quad a-b\sqrt2 \in [y_0,y_1],
\end{equation}
where $x_0<x_1$ and $y_0<y_1$ are given real numbers, and $a,b$ are
unknown integers. We start with a result limiting the number of
solutions.

\begin{lemma}\label{lem-interval-lower}
  Let $[x_0,x_1]$ and $[y_0,y_1]$ be closed intervals of real
  numbers. Let $\delta = x_1-x_0$ and $\Delta=y_1-y_0$, and assume
  $\delta\Delta < 1$. Then there exists at most one
  $\alpha=a+b\sqrt2\in\Z[\sqrt2]$ satisfying
  (\ref{eqn-approx-interval}).
\end{lemma}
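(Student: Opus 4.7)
The plan is to proceed by contradiction: assume there are two distinct solutions $\alpha = a+b\sqrt{2}$ and $\alpha' = a'+b'\sqrt{2}$ in $\Z[\sqrt{2}]$, and derive a violation of $\delta\Delta < 1$. The crucial idea is that the difference $\beta = \alpha - \alpha'$ is a nonzero element of $\Z[\sqrt{2}]$, hence its norm $\rN(\beta) = \beta\bul\beta$ is a nonzero integer and therefore satisfies $|\rN(\beta)| \geq 1$.

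First I would observe that if both $\alpha, \alpha' \in [x_0,x_1]$, then $|\beta| = |\alpha - \alpha'| \leq \delta$, and similarly $|\beta\bul| = |\alpha\bul - {\alpha'}\bul| \leq \Delta$. Multiplying these two inequalities gives
\begin{equation}
|\rN(\beta)| = |\beta\bul\beta| = |\beta\bul|\cdot|\beta| \leq \delta\Delta < 1.
\end{equation}
Since $\beta \in \Z[\sqrt{2}]$ and $\beta \neq 0$, we have $\rN(\beta) \in \Z \setminus \{0\}$, so $|\rN(\beta)| \geq 1$, a contradiction. Hence $\alpha = \alpha'$.

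I do not expect any real obstacle here; the lemma is essentially the standard observation that the map $\alpha \mapsto (\alpha, \alpha\bul)$ embeds $\Z[\sqrt{2}]$ as a discrete lattice in $\R^2$ whose fundamental domain has area $\geq 1$ (with respect to the product measure), so two lattice points cannot both lie in a product of intervals of area less than $1$. The entire proof is one short paragraph, and the only step worth being careful about is matching the two uses of $\sqrt{2}$-conjugation with the definition of $\rN$ from equation (\ref{eqn-rN}).
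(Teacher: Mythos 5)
Your proof is correct and matches the paper's argument: both bound $|\rN(\alpha-\alpha')| = |\alpha-\alpha'|\cdot|\alpha\bul-{\alpha'}\bul| \leq \delta\Delta < 1$ and conclude from integrality of the norm that the difference is zero. The only cosmetic difference is that you phrase it as a contradiction with $|\rN(\beta)|\geq 1$, while the paper concludes directly that $\rN(\alpha-\beta)=0$ hence $\alpha=\beta$; the substance is identical.
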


\begin{proof}
  Suppose $\alpha$ and $\beta$ are two solutions. Then
  \begin{equation}
    |\rN(\alpha-\beta)| =
    |\alpha-\beta|\cdot|\alpha\bul-\beta\bul| \leq \delta\Delta < 1.
  \end{equation}
  Since $\rN(\alpha-\beta)$ is an integer, it follows that
  $\rN(\alpha-\beta)=0$, and therefore $\alpha=\beta$.
\end{proof}

The next result establishes the existence of solutions.

\begin{lemma}\label{lem-interval}
  Let $[x_0,x_1]$ and $[y_0,y_1]$ be closed intervals of real
  numbers. Let $\delta = x_1-x_0$ and $\Delta=y_1-y_0$, and assume
  $\delta\Delta \geq (1+\sqrt2)^2$. Then there exists at least one
  $\alpha=a+b\sqrt2\in\Z[\sqrt2]$ satisfying
  (\ref{eqn-approx-interval}). Moreover, there is an efficient
  algorithm for computing such $a$ and $b$.
\end{lemma}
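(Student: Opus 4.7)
For fixed $b\in\Z$, conditions (\ref{eqn-approx-interval}) translate into $a\in[A(b),B(b)]$, where $A(b)=\max(x_0-b\sqrt2,\,y_0+b\sqrt2)$ and $B(b)=\min(x_1-b\sqrt2,\,y_1+b\sqrt2)$. A routine case analysis shows that $L(b):=B(b)-A(b)$ is a continuous, piecewise-linear function of $b$ with slopes $\pm 2\sqrt2$ and $0$, attaining its maximum value $\min(\delta,\Delta)$ on a central plateau. The sufficient condition I intend to exploit is $L(b)\geq 1$ for some integer $b$, since then the closed interval $[A(b),B(b)]$ has length at least one and hence contains an integer $a$, at which point $\alpha=a+b\sqrt2$ solves the problem.

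The obstacle is that we are only assuming the \emph{product} $\delta\Delta$ is large, so it may happen that $\min(\delta,\Delta)<1$ and $L$ never reaches $1$ for any $b$. I would circumvent this by first reducing to $\delta,\Delta\geq 1$ using the unit $\lambda=1+\sqrt2$ of $\Z[\sqrt2]$. Since $\lambda\bul=1-\sqrt2=-1/\lambda$, replacing $\alpha$ by $\lambda^{2m}\alpha$ rescales the two intervals by $\lambda^{2m}$ and $\lambda^{-2m}$ respectively, preserving the product $\delta\Delta$. Choosing $m\in\Z$ so that $\delta':=\lambda^{2m}\delta\in[1,\lambda^2)$ forces $\Delta':=\lambda^{-2m}\Delta\geq(1+\sqrt2)^2/\lambda^2=1$, so both rescaled widths are at least $1$.

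With $\delta',\Delta'\geq 1$, the set $\{b\in\R:L(b)\geq 1\}$ (for the rescaled problem) is a non-empty closed interval, and one computes its length to be $(\delta'+\Delta'-2)/(2\sqrt2)$. By the AM--GM inequality, $\delta'+\Delta'\geq 2\sqrt{\delta'\Delta'}\geq 2(1+\sqrt2)$, so this length is at least $1$ and the interval contains an integer $b$. Any integer $a\in[A(b),B(b)]$ then yields $\alpha'=a+b\sqrt2$ solving the rescaled problem, and $\alpha=\lambda^{-2m}\alpha'\in\Z[\sqrt2]$ (valid because $\lambda$ is a unit) solves the original. For the algorithm, each step is a constant number of arithmetic operations in $\Q[\sqrt2]$: locate $m$ by comparing $\delta$ to powers of $\lambda$, compute the endpoints of the valid $b$-interval and take a ceiling, then do the same for $a$. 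The main technical point will be routinely verifying the piecewise-linear formula for $L(b)$ and the length $(\delta'+\Delta'-2)/(2\sqrt2)$; no deep number theory is needed.
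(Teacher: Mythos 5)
Your proof is correct, and it reaches the sharp constant by a genuinely different route in the key step, even though it shares the paper's rescaling trick. The paper also normalizes via the unit $\lambda=1+\sqrt2$ (its claim (c) and (g) bring $\delta$ into $(1,\lambda]$ using single powers of $\lambda$, handling the sign of $\lambda\bul$ via a $\delta\leftrightarrow\Delta$ symmetry, whereas you use powers of $\lambda^2$ so that both scale factors stay positive -- both are fine). But the paper's existence argument is an explicit base case: for the rectangle $(\delta,\Delta)=(1+\sqrt2,\sqrt2)$ it chooses $a,b$ by two rounding inequalities and then verifies, in a three-way case analysis, that one of $\alpha$, $\alpha+\sqrt2$, $\alpha-1$ lies in the target region; the general statement follows by monotonicity and rescaling. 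You instead reduce to $\delta',\Delta'\geq 1$, note that for fixed $b$ the admissible $a$ form a closed interval of length $L(b)=\min\bigl(\delta',\,\Delta',\,x_1'-y_0'-2\sqrt2\,b,\,y_1'-x_0'+2\sqrt2\,b\bigr)$, compute that $\{b:L(b)\geq 1\}$ is a closed interval of length $(\delta'+\Delta'-2)/(2\sqrt2)$, and invoke AM--GM to make that length at least $1$; two uses of ``a closed interval of length at least $1$ contains an integer'' then produce $b$ and $a$. I checked the details: the four-term formula for $L(b)$, the length computation, the positivity of $(\lambda^{2m})\bul=\lambda^{-2m}$ that justifies pulling the solution back by $\lambda^{-2m}$ inside $\Z[\sqrt2]$, and the AM--GM bound $\delta'+\Delta'\geq 2(1+\sqrt2)$ all hold, and the resulting algorithm is as efficient as the paper's (the only non-constant cost is the $O(|\log\delta|)$ search for the normalizing power of $\lambda$, which the paper's proof has too). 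Your version avoids the case analysis entirely and makes it transparent why the threshold $(1+\sqrt2)^2$ appears (it is exactly where AM--GM gives the length-$1$ interval of good $b$); the paper's version, in exchange, exhibits the solution by direct formulas with no sweep over $b$. The one cosmetic remark: the ``central plateau at height $\min(\delta',\Delta')$'' claim is not needed for the argument (and is a single point when $\delta'=\Delta'$), so you can drop it and work directly from the four-term expression for $L(b)$.
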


\begin{proof}
  Let us say that a pair of positive real numbers $(\delta,\Delta)$
  has the {\em coverage property} if for all $x,y\in\R$, there exists
  $\alpha\in\Z[\sqrt2]$ with
  \begin{equation}\label{eqn-grideq}
    (\alpha,\alpha\bul)\in[x,x+\delta]\times[y,y+\Delta].
  \end{equation}
  The goal, then, is to show that $\delta\Delta \geq (1+\sqrt2)^2$
  implies the coverage property.

  Before we continue, it is perhaps helpful to consider the following
  illustration. Here, $a$ is shown on the horizontal axis, and
  $b\sqrt2$ is shown on the vertical axis. The region defined by
  $(a+b\sqrt2,a-b\sqrt2)\in[x,x+\delta]\times[y,y+\Delta]$ is a
  rectangle oriented at 45 degrees. We are only interested in solutions
  where $a,b$ are integers; these solutions therefore lie on the grid
  $\Z\times\sqrt2\,\Z$. Note that the horizontal and vertical spacing
  are not the same.
  \[ \begin{tikzpicture}[scale=0.6]
    \fill[fill=blue!10] (2,1) -- (3,2) -- (1.5,3.5) -- (0.5,2.5) -- cycle;
    \foreach\x in {-2,...,5} {
      \foreach\y in {1,...,3} {
        \fill (.87*\x, 1.23*\y) circle (.07);
      }
    }
    \draw[->] (-4,0) -- (6,0) node[right] {$a$};
    \draw[->] (0,0) -- (0,5) node[left] {\small$b\sqrt2$};
    \draw (-2.1,-0.1) -- (2,4);
    \draw (0.9,-0.1) -- (3.7,2.7);
    \draw (3.1,-0.1) -- (-.4,3.4);
    \draw (5.1,-0.1) -- (0.8,4.2);
    \draw (-2,0.2) -- (-2,-0.2) node[below] {\rule{0mm}{1.5ex}$y$};
    \draw (1,0.2) -- (1,-0.2) node[below] {\rule{0mm}{1.5ex}$y+\Delta$};
    \draw (3,0.2) -- (3,-0.2) node[below] {\rule{0mm}{1.5ex}$x$};
    \draw (5,0.2) -- (5,-0.2) node[below] {\rule{0mm}{1.5ex}$x+\delta$};
  \end{tikzpicture}
  \]
  We prove a sequence of claims leading up to the lemma.
  \begin{enumerate}\alphalabels
  \item[(a)] If $(\delta,\Delta)$ has the coverage property and
    $\delta'\geq\delta$, $\Delta'\geq\Delta$, then $(\delta',\Delta')$
    has the coverage property. This is a trivial consequence of the
    definitions.
  \item[(b)] For all $\delta,\Delta>0$, the pair $(\delta,\Delta)$ has
    the coverage property if and only if $(\Delta,\delta)$ has the
    coverage property. This is a trivial consequence of the definitions.
  \item[(c)] Let $\lambda=1+\sqrt2$. Then for all $\delta,\Delta>0$,
    the pair $(\delta,\Delta)$ has the coverage property if and only
    if $(\lambda\delta,\lambda\inv\Delta)$ has the coverage property.
    Indeed, recall that $\lambda\bul=-\lambda\inv=1-\sqrt2$. Therefore
    both $\lambda$ and $\lambda\inv$ are elements of $\Z[\sqrt2]$.
    Suppose $(\delta,\Delta)$ has the coverage property, and
    $x,y\in\R$ are given. Then there exists $\alpha\in\Z[\sqrt2]$ with
    $(\alpha,\alpha\bul)\in[\lambda\inv x,\lambda\inv
    x+\delta]\times[-\lambda y-\Delta,-\lambda y]$. Let
    $\beta=\lambda\alpha$. It follows that
    \begin{equation}
      (\beta,\beta\bul) = 
      (\lambda\alpha,\lambda\bul\alpha\bul) =
      (\lambda\alpha,-\lambda\inv\alpha\bul)
      \in [x,x+\lambda\delta]\times[y,y+\lambda\inv\Delta],
    \end{equation}
    so $(\lambda\delta,\lambda\inv\Delta)$ has the coverage property.
    The converse if proved symmetrically.
  \item[(d)] $(\delta,\Delta) = (1+\sqrt2,\sqrt2)$ has the coverage
    property. Geometrically, this is equivalent to the statement that
    $\R^2$ is covered by all translations along the grid
    $\Z\times\sqrt2\,\Z$ of the rectangle $R$ defined by
    $(a+b\sqrt2,a-b\sqrt2)\in[0,1+\sqrt2]\times[0,\sqrt2]$.
    \[ \begin{tikzpicture}[scale=0.6]
      \foreach\x in {3,2,...,0} {
        \foreach\y in {2,1,...,0} {
          \filldraw[fill=blue!10,draw=black,shift={(\x,1.414*\y)}] (0,0) -- (.707,-.707) --
          (1.914,0.5) -- (1.207,1.207) -- cycle;
        }
      }
      \filldraw[fill=red!10,draw=black] (0,0) -- (.707,-.707) --
      (1.914,0.5) -- (1.207,1.207) -- cycle;
      \foreach\x in {-1,...,5} {
        \foreach\y in {0,...,2} {
          \fill (\x, 1.414*\y) circle (.07);
        }
      }
      \draw[->] (-2.3,0) -- (6.3,0) node[right] {$a$};
      \draw[->] (0,0) -- (0,4.2) node[left] {\small$b\sqrt2$};
    \end{tikzpicture}
    \]
    In order to have an explicit algorithm for finding $\alpha$, we give
    an algebraic proof. Consider arbitrary $x,y\in\R$. Let $a,b$ be
    integers such that 
    \begin{align}
      a-1 &~\leq~ \frac{x+y+\Delta}{2} ~<~ a,\label{eqn-i1} \\
      (b-1)\sqrt2 & ~\leq~ \frac{x-y-\Delta}{2} ~<~ b\sqrt2.\label{eqn-i2}
    \end{align}
    Let $\alpha=a+b\sqrt2$, $\alpha'=a+(b+1)\sqrt2$, and
    $\alpha''=(a-1)+b\sqrt2$. We claim that either $\alpha$,
    $\alpha'$, or $\alpha''$ is a solution to
    (\ref{eqn-grideq}).
    \begin{itemize}
    \item Case 1: $a-b\sqrt2\leq y+\Delta$. In this case, we have:
      \begin{align*}
        \alpha &= a+b\sqrt2 > \frac{x+y+\Delta}{2} +
        \frac{x-y-\Delta}{2} = x
        &&\mbox{by (\ref{eqn-i1}) and (\ref{eqn-i2})},\\
        \alpha &= a+b\sqrt2 \leq \frac{x+y+\Delta}{2} + 1 +
        \frac{x-y-\Delta}{2} + \sqrt2 = x+\delta
        &&\mbox{by (\ref{eqn-i1}) and (\ref{eqn-i2})},\\
        \alpha\bul &= a-b\sqrt2 > \frac{x+y+\Delta}{2} -
        \frac{x-y-\Delta}{2}-\sqrt2 = y
        &&\mbox{by (\ref{eqn-i1}) and (\ref{eqn-i2})},\\
        \alpha\bul &= a-b\sqrt2 \leq y+\Delta
        &&\mbox{by assumption}.
      \end{align*}
      Therefore, $\alpha$ is a solution to
      (\ref{eqn-grideq}).
    \item Case 2: $a-b\sqrt2 > y+\Delta$ and $a+b\sqrt2 \leq x+1$. In
      this case, we have:
      \begin{align*}
        \alpha' &= a+(b+1)\sqrt2 > \frac{x+y+\Delta}{2} +
        \frac{x-y-\Delta}{2} + \sqrt2 = x + \sqrt2 > x
        &&\mbox{by (\ref{eqn-i1}) and (\ref{eqn-i2})},\\
        \alpha' &= a+(b+1)\sqrt2 \leq x+1+\sqrt2 = x+\delta
        &&\mbox{by assumption},\\
        {\alpha'}\bul &= a-(b+1)\sqrt2 > y+\Delta-\sqrt2 = y
        &&\mbox{by assumption},\\
        {\alpha'}\bul &= a-(b+1)\sqrt2 < \frac{x+y+\Delta}{2} + 1 -
        \frac{x-y-\Delta}{2} - \sqrt2 = y+\Delta+1-\sqrt2 < y+\Delta
        &&\mbox{by (\ref{eqn-i1}) and (\ref{eqn-i2})}.
      \end{align*}
      Therefore, $\alpha'$ is a solution to
      (\ref{eqn-grideq}).
    \item Case 3: $a-b\sqrt2 > y+\Delta$ and $a+b\sqrt2 > x+1$. In
      this case, we have:
      \begin{align*}
        \alpha'' &= (a-1)+b\sqrt2 > x
        &&\mbox{by assumption},\\
        \alpha'' &= (a-1)+b\sqrt2 \leq \frac{x+y+\Delta}{2} +
        \frac{x-y-\Delta}{2} + \sqrt2 = x+\sqrt2 < x+\delta
        &&\mbox{by (\ref{eqn-i1}) and (\ref{eqn-i2})},\\
        {\alpha''}\bul &= (a-1)-b\sqrt2 > y+\Delta-1 > y
        &&\mbox{by assumption},\\
        {\alpha''}\bul &= (a-1)-b\sqrt2 \leq
        \frac{x+y+\Delta}{2} - \frac{x-y-\Delta}{2} = y+\Delta
        &&\mbox{by (\ref{eqn-i1}) and (\ref{eqn-i2})}.
      \end{align*}
      Therefore, $\alpha''$ is a solution to
      (\ref{eqn-grideq}).
    \end{itemize}
  \item[(e)] $(\delta,\Delta) = (2+\sqrt2, 1)$ has the coverage
    property.  This follows from (b)--(d), by noting that
    $2+\sqrt2=\lambda \sqrt2$ and $1=\lambda\inv (1+\sqrt2)$. 
  \item[(f)] Suppose $\delta\Delta\geq (1+\sqrt2)^2$ and $1<\delta
    \leq 1+\sqrt2$. Then $(\delta,\Delta)$ has the coverage property.
    We consider two cases. Case 1: $\delta > \sqrt2$. Since
    $\delta\leq 1+\sqrt2$, we have $\Delta\geq 1+\sqrt2$. But
    $(\sqrt2,1+\sqrt2)$ has the coverage property by (d), so the claim
    follows from (a). Case 2: $\delta\leq\sqrt2$. Then
    \begin{equation}
      \Delta \geq \frac{(1+\sqrt2)^2}{\sqrt2} > 2+\sqrt2.
    \end{equation}
    Also $\delta>1$. But $(1,2+\sqrt2)$ has the coverage property by
    (e), so the claim follows from (a).
  \item[(g)] Suppose $\delta,\Delta>0$ such that $\delta\Delta\geq
    (1+\sqrt2)^2$. Then $(\delta,\Delta)$ has the coverage property.
    Indeed, note that there exists some $n\in\Z$ such that
    $1<\lambda^{n}\delta\leq \lambda = 1+\sqrt2$. Then
    $(\lambda^{n}\delta,\lambda^{-n}\Delta)$ has the coverage property
    by (f), and $(\delta,\Delta)$ has the coverage property by (c).
  \end{enumerate}
  This finishes the proof of the lemma. Note that in each step of the
  proof, we have shown explicitly how to compute a solution
  $\alpha$. Since there are no iterative constructions (apart from the
  computation of $\lambda^n$ in (g), which can be done in
  $O(|\log(\delta)|)$ steps), the computational effort is essentially
  trivial, and certainly not greater than say $O(\log^2(\delta))$.
  \qedhere
\end{proof}

\begin{remark}
  We note that the bounds on $\delta\Delta$ in
  Lemmas~\ref{lem-interval-lower} and {\ref{lem-interval}} are
  sharp. For the sharpness of Lemma~\ref{lem-interval-lower}, consider
  $(\alpha,\alpha\bul)\in[0,1]\times[0,1]$, which has two solutions
  $\alpha=0$ and $\alpha=1$. For the sharpness of Lemma~\ref{lem-interval},
  consider $(\alpha,\alpha\bul)\in[0,1+\sqrt2]\times[-\sqrt2,1]$, which
  has exactly four solutions
  $\alpha=0$, $\alpha=1$, $\alpha=\sqrt2$, and $\alpha=1+\sqrt2$. Therefore, for all $\epsilon>0$,
  $(\alpha,\alpha\bul)\in[\epsilon,1+\sqrt2-\epsilon]\times[-\sqrt2+\epsilon,1-\epsilon]$
  has no solutions at all.
\end{remark}

We also give a version of Lemma~\ref{lem-interval} where $a$ is
restricted to be either even or odd:

\begin{corollary}\label{cor-interval-even-odd}
  Let $[x_0,x_1]$ and $[y_0,y_1]$ be closed intervals of real
  numbers. Let $\delta = x_1-x_0$ and $\Delta=y_1-y_0$, and assume
  $\delta\Delta \geq 2(1+\sqrt2)^2$. Then there exist $a',b'\in\Z$
  such that $a'+b'\sqrt2\in[x_0,x_1]$ and $a'-b'\sqrt2 \in [y_0,y_1]$,
  and $a'$ is even. There also exist $a'',b''\in\Z$ such that
  $a''+b''\sqrt2\in[x_0,x_1]$ and $a''-b''\sqrt2 \in [y_0,y_1]$, and
  $a''$ is odd. Moreover, there is an efficient algorithm for
  computing such $a'$, $b'$, $a''$, and $b''$.
\end{corollary}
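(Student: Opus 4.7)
The strategy is to reduce both claims to Lemma~\ref{lem-interval} by a change of variables, exploiting that requiring the integer part of $\alpha\in\Z[\sqrt 2]$ to be even restricts $\alpha$ to the index-$2$ sublattice $\sqrt 2\,\Z[\sqrt 2]$ of $\Z[\sqrt 2]$. Explicitly, if $\alpha=a'+b'\sqrt 2$ with $a'=2\tilde a$, then $\alpha=\sqrt 2\,\beta$ where $\beta=b'+\tilde a\sqrt 2\in\Z[\sqrt 2]$, and this assignment is a bijection between even-$a'$ elements of $\Z[\sqrt 2]$ and all of $\Z[\sqrt 2]$.

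Under this substitution, using $(\sqrt 2)\bul=-\sqrt 2$, the constraints $\alpha\in[x_0,x_1]$ and $\alpha\bul\in[y_0,y_1]$ become $\beta\in[x_0/\sqrt 2,\,x_1/\sqrt 2]$ and $\beta\bul\in[-y_1/\sqrt 2,\,-y_0/\sqrt 2]$. The new intervals have lengths $\delta/\sqrt 2$ and $\Delta/\sqrt 2$, so their product is $\delta\Delta/2\geq(1+\sqrt 2)^2$, which is exactly the hypothesis of Lemma~\ref{lem-interval}. I apply the lemma to compute $\beta$ efficiently and recover $(a',b')$ from $\alpha=\sqrt 2\,\beta$.

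For the odd case, I reduce to the even case by a shift: if $a''$ is odd then $\alpha-1$ has even integer part, so I set $\alpha-1=\sqrt 2\,\beta$ and repeat the argument on the translated intervals $[x_0-1,x_1-1]$ and $[y_0-1,y_1-1]$, which have the same lengths $\delta,\Delta$ as the originals, so the area hypothesis is preserved. No step is delicate; the factor of $2$ in the hypothesis $\delta\Delta\geq 2(1+\sqrt 2)^2$ is precisely what compensates for the passage to an index-$2$ sublattice, and the efficient computability of $a',b',a'',b''$ follows directly from the efficient algorithm provided by Lemma~\ref{lem-interval}.
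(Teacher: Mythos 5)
Your proof is correct and follows essentially the same route as the paper: the even case is handled by the rescaling $\alpha=\sqrt2\,\beta$, which turns the constraints into intervals of lengths $\delta/\sqrt2$ and $\Delta/\sqrt2$ (with the $y$-interval negated, since $(\sqrt2)\bul=-\sqrt2$) so that Lemma~\ref{lem-interval} applies, and the odd case is reduced to the even case by translating the intervals by $1$. The only difference is presentational — you make explicit the sublattice/bijection viewpoint and the verification that $\delta\Delta/2\geq(1+\sqrt2)^2$, which the paper leaves implicit.
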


\begin{proof}
  This is proved by rescaling. To prove the first claim, use
  Lemma~\ref{lem-interval} to find $a,b\in\Z$ with
  $a+b\sqrt2\in[x_0/\sqrt2,x_1/\sqrt2]$ and $a-b\sqrt2 \in
  [-y_1/\sqrt2,-y_0/\sqrt2]$. Let $a'=2b$ and $b'=a$. Then we have
  $a'+b'\sqrt2 = \sqrt2\,(a+b\sqrt2)\in [x_0,x_1]$ and $a'-b'\sqrt2 =
  -\sqrt2\,(a-b\sqrt2) \in [y_0,y_1]$, as desired.  To prove the
  second claim, use the first claim to find $a'+b'\sqrt2 \in
  [x_0-1,x_1-1]$ and $a'-b'\sqrt2\in [y_0-1,y_1-1]$, with $a'$ even;
  then let $a''=a'+1$ and $b''=b'$.
\end{proof}

%----------------------------------------------------------------------
\section[Approximation up to epsilon]{Approximation up to $\epsilon$}

As mentioned in Section~\ref{sec-summary}, given $\theta$ and
$\epsilon$, we wish to find $k\geq 0$ and $u,t\in\Z[\omega]$ such that
\begin{equation}\label{eqn-def-u}
  U = \rtt{k}\zmatrix{cc}{u & -t\da \\ t & u\da}
\end{equation}
is unitary and satisfies $\norm{U-R_z(\theta)}\leq \epsilon$. We now elaborate how to
determine $k$, $u$, and $t$.

%----------------------------------------------------------------------
\subsection[The epsilon-region]{The $\epsilon$-region}\label{sec-epsilon-region}

We first note that $U$ is unitary if and only if $u\da u+t\da
t=2^k$. We now examine how to express the error $\epsilon$ as a function of
$u$; this will make explicit the set of available $u$ for a given
$\epsilon$. Let $z=x+yi = e^{-i\theta/2}$. First note that, since both
$U$ and $R_z(\theta)$ are $2\times 2$ unitary of determinant 1, the
operator norm of $U-R_z(\theta)$ coincides with $1/\sqrt2$ of the
Hilbert-Schmidt norm. It can therefore be calculated as
follows. Let $\uhat=\rtt{k}u=\alpha+\beta i$ and
$\that=\rtt{k}t$.
\begin{equation}
  \norm{U-R_z(\theta)}^2 = \frac12 \norm{U-R_z(\theta)}_{\HS}^2
  = |\uhat-z|^2 + |\that|^2.
\end{equation}
Using ${\uhat}\da \uhat + {\that}\da \that=1$ and $z\da z=1$, we can further simplify
this to:
\begin{equation}
  |\uhat-z|^2 + |\that|^2 = (\uhat-z)\da(\uhat-z) + {\that}\da \that
  = {\uhat}\da \uhat - {\uhat}\da z - z\da \uhat + z\da z + {\that}\da \that
  = 2 - 2\Realpart({\uhat}\da z)
  = 2 - 2\zmatrix{c}{\alpha\\\beta}\cdot\zmatrix{c}{x\\y}.
\end{equation}
The error is therefore directly related to the dot product of $\uhat$ and
$z$, regarded as vectors in $\R^2$. Writing $\vec z = (x,y)^T$ and
$\vec u = (\alpha,\beta)^T$, we have
\begin{equation}
  \norm{U-R_z(\theta)}^2 \leq \epsilon ^2
  \iff 2 - 2\vec u\cdot \vec z \leq \epsilon^2
  \iff \vec u\cdot \vec z \geq 1 - \frac{\epsilon^2}{2}.
\end{equation}
Let us define the {\em $\epsilon$-region} $\Repsilon$ to be the
corresponding subset of the unit disk. Let $\Disk=\s{\vec u ~\mid~
  |\vec u|^2 \leq 1}$ be the closed unit disk. Then
\begin{equation}
  \Repsilon =
  \s{\vec u\in\Disk \mid \vec u\cdot \vec z\geq 1 - \frac{\epsilon^2}{2}}.
\end{equation}
The $\epsilon$-region is shown as a shaded region in the illustration
below. Note that the width of this region is $\frac{\epsilon^2}{2}$ at
its widest point, and its length, to second order, is $2\epsilon$, and
in any case greater than $\sqrt2\,\epsilon$.
\[
\begin{tikzpicture}[scale=2]
  \fill[fill=blue!10, rotate=30] (cos 50, sin 50) -- (cos 50, -sin 50) arc (-50:50:1) -- cycle;
  \path[color=gray] (-.3,.4) node {$\Disk$};
  \draw[color=gray,->] (-1.2,0) -- (1.4,0);
  \draw[color=gray,->] (0,-1.1) -- (0,1.2);
  \path[color=gray] (0,1) node[above left] {$i$};
  \path[color=gray] (1,0) node[below=6pt,right=-2pt] {$1$};
  \draw[color=gray] (0,0) circle (1);
  \draw[rotate=30] (cos 50, -sin 50) arc (-50:50:1);
  \draw[->, rotate=30] (0,0) -- (1,0) node[right] {$\vec z$};
  \draw[rotate=30] (cos 50, sin 50) -- (cos 50,-1);
  \draw[rotate=30] (1,sin 50) -- (1,-1);
  \draw[rotate=30] (0.4, sin 50) -- (1, sin 50);
  \draw[rotate=30] (0.4, -sin 50) -- (1, -sin 50);
  \draw[<->, rotate=30] (cos 50,-0.9) -- (1,-0.9);
  \draw[rotate=30] (0.9, -0.9) node[below] {$\frac{\epsilon^2}{2}$};
  \draw[<->, rotate=30] (0.5, sin 50) -- (0.5, -sin 50) node[above=13pt,left=8pt] {$\approx 2\epsilon$};
  \path[rotate=30] (0.8,0.2) node {$\Repsilon$};
\end{tikzpicture}
\]
In summary, $\norm{U-R_z(\theta)}\leq \epsilon$ if and only if $\vec
u\in \Repsilon$. 

From now on, it will be convenient to identify the complex plane with
$\R^2$, i.e., we will regard $\Disk$ and $\Repsilon$ as subsets of the
complex numbers as well as $\R^2$.

%----------------------------------------------------------------------
\subsection{Candidates}

Assume, for the moment, that some suitable $k\geq 1$ has been fixed.
The problem of finding an $\epsilon$-approximation of $R_z(\theta)$
can now be reduced to the following 2-step problem:
\begin{enumerate}\alphalabels
\item[(a)] find $u\in\Z[\omega]$ such that $\uhat=u/{\rt{k}}$ is in the
  $\epsilon$-region, and
\item[(b)] find $t\in\Z[\omega]$ such that $t\da t+u\da u=2^k$.
\end{enumerate}
Recall from Section~\ref{sec-dioph} that we have an efficient solution
to (b) if $\xi=2^k-u\da u$ satisfies the hypotheses of
Theorem~\ref{thm-dioph}.  We will quite literally leave one of these
hypotheses, namely the primality of $\xi\bul\xi$, to chance. However, we
will arrange things such that the remaining hypotheses are satisfied
by design. To that end, we will say that $\uhat$ is a {\em candidate} if
it satisfies all the required conditions, except possibly for the
primality of $\xi\bul\xi$. This is made precise in the following
definition and lemma.

\begin{definition}
  Let $\epsilon>0$, $\theta\in\R$, and $k\geq 1$ be fixed. Let
  $\uhat=u/{\rt{k}}$, where $u\in\Z[\omega]$. Let
  $\xi=x+y\sqrt2=2^k-u\da u$. Then $\uhat$ is called a {\em candidate} if
  $\uhat\in \Repsilon$, $x$ is odd, $y$ is even, $\xi\geq 0$, and
  $\xi\bul\geq 0$.  We say that $\uhat$ is a {\em prime candidate} if,
  moreover, $p=\xi\bul\xi$ is prime.
\end{definition}

We will further restrict attention to candidates $\uhat=u/\rt{k}$
where $u$ is from the ring $\Z[\sqrt2,i]\seq\Z[\omega]$, i.e., of the
form $u=\alpha + \beta i$ with $\alpha,\beta\in\Z[\sqrt2]$.

\begin{lemma}\label{lem-candidate}
  Let $\uhat=u/\rt{k}$, where $u=\alpha+\beta i$ and
  $\alpha,\beta\in\Z[\sqrt2]$. Let us write $\alpha=a+b\sqrt2$ and
  $\beta=c+d\sqrt2$.  Then $\uhat$ is a candidate if and only if the
  following three conditions hold:
  \begin{itemize}
  \item $\uhat\in \Repsilon$,
  \item ${\uhat}\bul \in \Disk$, and
  \item $a+c$ is odd.
  \end{itemize}
\end{lemma}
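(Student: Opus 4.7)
My plan is to unpack the definition of candidate one clause at a time and match each clause against the three bullets in the lemma. The first step is to compute $u\da u$ explicitly. Since $\alpha,\beta\in\Z[\sqrt2]\subseteq\R$ are fixed by complex conjugation, one has $u\da=\alpha-\beta i$ and hence $u\da u=\alpha^2+\beta^2$. Expanding with $\alpha=a+b\sqrt2$ and $\beta=c+d\sqrt2$ gives
\[
  u\da u = (a^2+2b^2+c^2+2d^2) + 2(ab+cd)\sqrt2,
\]
so $\xi=2^k-u\da u$ has coefficients $x=2^k-(a^2+2b^2+c^2+2d^2)$ and $y=-2(ab+cd)$.

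From these explicit formulas, the parity conditions reduce to arithmetic. The expression for $y$ makes it evident that $y$ is always even, so the clause ``$y$ is even'' in the definition is automatic. For $x$, reducing modulo $2$ and using $k\geq 1$ together with $n^2\equiv n\pmod2$, I would get $x\equiv a+c\pmod2$; hence ``$x$ is odd'' is equivalent to ``$a+c$ is odd'', the third bullet of the lemma.

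For the inequalities, I plan to use two elementary observations. First, by the definition of $\Repsilon$ we have $\Repsilon\seq\Disk$, so $\uhat\in\Repsilon$ already forces $|u|^2\leq 2^k$, i.e., $\xi=2^k-|u|^2\geq 0$. Thus the clause ``$\xi\geq 0$'' is redundant once $\uhat\in\Repsilon$ is assumed. Second, applying $\bul$ to $\xi=2^k-u\da u$ and using that $\bul$ commutes with $\da$ and fixes $2^k$, I would obtain $\xi\bul = 2^k - u\bulda u\bul = 2^k - |u\bul|^2$. Combined with $(\rt{k})\bul=(-\sqrt2)^k$, which yields $|\uhat\bul|^2=|u\bul|^2/2^k$, the condition $\xi\bul\geq 0$ becomes exactly $|\uhat\bul|^2\leq 1$, i.e., $\uhat\bul\in\Disk$, the second bullet.

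Putting these pieces together, the five clauses in the definition of candidate collapse to the three clauses in the lemma. The whole argument is a direct unpacking of definitions; the only part requiring care is keeping track of how the two automorphisms $\da$ and $\bul$ interact with the factor $\rt{k}$, and even there the sign $(-1)^k$ that appears in $\uhat\bul=(-1)^k u\bul/\rt{k}$ vanishes upon taking absolute values, so it creates no real obstacle.
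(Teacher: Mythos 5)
Your proof is correct and follows essentially the same route as the paper's: expand $u\da u$ in coordinates, observe that $y$ is automatically even and $x\equiv a+c\pmod 2$ (using $k\geq 1$), and translate $\xi\geq 0$ and $\xi\bul\geq 0$ into $\uhat\in\Disk$ (subsumed by $\uhat\in\Repsilon$) and $\uhat\bul\in\Disk$ respectively. Your extra care with the sign $(-1)^k$ in $\uhat\bul$ is a nice touch the paper leaves implicit, but it changes nothing of substance.
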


\begin{proof}
  With the given choice of notation, we have
  \begin{equation}
    \begin{split}
      \xi &= 2^k-u\da u
      \\&= 2^k - \alpha^2 - \beta^2
      \\&= (2^k - a^2-2b^2 -c^2-2d^2)- (2ab+2cd)\sqrt2,
    \end{split}
  \end{equation}
  hence $x = 2^k - a^2-2b^2 -c^2-2d^2$ and $y = 2(ab+cd)$. Then $x$ is
  odd if and only if $a+c$ is odd. The condition that $y$ is even is
  automatically satisfied. The condition $\xi\geq 0$ is equivalent to
  $u\da u \leq 2^k$, or equivalently ${\uhat}\da \uhat\leq 1$, i.e., $\uhat\in
  \Disk$, which follows from $\uhat\in \Repsilon$. Similarly, $\xi\bul\geq 0$ is
  equivalent to ${\uhat}\bul\in\Disk$.
\end{proof}

%----------------------------------------------------------------------
\subsection{Candidate selection}

\begin{theorem}\label{thm-candidate-selection}
  Let $\epsilon>0$ and $\theta$ be fixed, and let $k\geq
  C+2\log_2(1/\epsilon)$, where $C=\frac{5}{2}+2\log_2(1+\sqrt2)$.  Then there
  exists a set of at least $n=\floor{\frac{4\sqrt2}{\epsilon}}$
  candidates $\uhat$ satisfying the conditions of
  Lemma~\ref{lem-candidate}. Moreover, there is an efficient algorithm
  for computing a random candidate from this set.
\end{theorem}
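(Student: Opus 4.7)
The plan is to construct the required $n$ candidates explicitly by slicing a thin strip inside the $\epsilon$-region into $n$ small axis-aligned sub-rectangles and producing one candidate per sub-rectangle by invoking Corollary~\ref{cor-interval-even-odd} twice per rectangle. The randomized algorithm will then be: pick $j$ uniformly from $\{1, \ldots, n\}$ and run the deterministic construction for that $j$.

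First, writing $\vec z = (x_0, y_0)$, I would identify an axis-aligned rectangle inscribed in $\rt{k}\Repsilon$ and oriented along the chord direction of $\Repsilon$. Since $\Repsilon$ has depth $\epsilon^2/2$ along $\vec z$ and chord length at least $\sqrt 2\,\epsilon$ transverse to $\vec z$ (Section~\ref{sec-epsilon-region}), a short case analysis on the orientation of $\vec z$---splitting along whether $|x_0| \geq 1/\sqrt 2$ or $|y_0| \geq 1/\sqrt 2$---identifies an inscribed axis-aligned rectangle whose sides along each axis are of respective orders $\epsilon$ and $\epsilon^2$ (with the long direction assigned to whichever axis the chord has the larger projection onto). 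I would then partition the long side of this rectangle into $n$ equal sub-intervals, yielding axis-aligned sub-rectangles $R_1, \ldots, R_n$, with the property that after scaling by $\rt{k}$ each $R_j$ has real-axis projection of length $\delta_\alpha$ and imaginary-axis projection of length $\delta_\beta$, both of order $\rt{k}\epsilon^2$.

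For each $R_j = [x_0^{(j)}, x_1^{(j)}] \times [y_0^{(j)}, y_1^{(j)}]$, I would invoke Corollary~\ref{cor-interval-even-odd} twice: once to produce $\alpha_j = a_j + b_j\sqrt 2 \in \Z[\sqrt 2]$ with $\alpha_j \in [x_0^{(j)}, x_1^{(j)}]$ and $\alpha_j\bul \in [-\rt{k-1}, \rt{k-1}]$, and once to produce $\beta_j = c_j + d_j\sqrt 2 \in \Z[\sqrt 2]$ with $\beta_j \in [y_0^{(j)}, y_1^{(j)}]$ and $\beta_j\bul \in [-\rt{k-1}, \rt{k-1}]$. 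The corollary lets me prescribe the parity of each of $a_j$ and $c_j$ independently, so I can arrange that $a_j + c_j$ is odd. Setting $u_j = \alpha_j + \beta_j i$ and $\uhat_j = u_j/\rt{k}$, the three conditions of Lemma~\ref{lem-candidate} are immediate: $\uhat_j \in R_j \subseteq \Repsilon$; the inequality $(\alpha_j\bul)^2 + (\beta_j\bul)^2 \leq 2 \cdot 2^{k-1} = 2^k$ gives $\uhat_j\bul \in \Disk$; and the parity condition is satisfied by construction.

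The hard part will be the arithmetic verification that the hypothesis $\delta \Delta \geq 2(1+\sqrt 2)^2$ of Corollary~\ref{cor-interval-even-odd} holds in both applications, so as to produce precisely the stated lower bound $k \geq C + 2\log_2(1/\epsilon)$ with $C = \frac{5}{2} + 2\log_2(1+\sqrt 2)$. Here $\Delta = 2\rt{k-1}$ (the range assigned to the $\sqrt 2$-conjugate), and $\delta$ is $\delta_\alpha$ or $\delta_\beta$ from above; tracking constants through the inscribed-rectangle construction should give $\delta \Delta$ comparable to $2^k \epsilon^2 / (2\sqrt 2)$. Demanding $\delta \Delta \geq 2(1+\sqrt 2)^2$ then rearranges to $2^k \epsilon^2 \geq 4\sqrt 2\,(1+\sqrt 2)^2 = 2^{5/2}(1+\sqrt 2)^2 = 2^C$, which is exactly $k \geq C + 2\log_2(1/\epsilon)$. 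Distinctness of the $n$ candidates follows from the disjointness of the $R_j$, and each candidate is computed via the efficient procedures of Corollary~\ref{cor-interval-even-odd} together with basic arithmetic, so the algorithm is efficient.
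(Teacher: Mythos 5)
Your construction hinges on inscribing in $\Repsilon$ an axis-aligned rectangle with one side of order $\epsilon$ and the other of order $\epsilon^2$, and that is where the argument breaks: such a rectangle does not exist for general $\theta$. Take $\theta=-\pi/2$, so $\vec z=(1/\sqrt2,1/\sqrt2)$. If an axis-aligned rectangle $[x_a,x_a+p]\times[y_a,y_a+q]$ is contained in $\Repsilon$, then its lower-left corner satisfies $(x_a+y_a)/\sqrt2\geq 1-\epsilon^2/2$, while its upper-right corner lies in $\Disk$ and hence satisfies $(x_a+p)+(y_a+q)\leq\sqrt2$; subtracting gives $p+q\leq \sqrt2\,\epsilon^2/2$. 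So when $\vec z$ is near the diagonal, \emph{every} axis-aligned rectangle inside $\Repsilon$ has both sides of order $\epsilon^2$ --- you are conflating the bounding box of the tilted sliver (which does have an $O(\epsilon)$ side) with an inscribed rectangle (which does not). Consequently there is no ``long side'' of length of order $\epsilon$ to cut into $n=\floor{\frac{4\sqrt2}{\epsilon}}$ sub-rectangles, and the case split on $|x_0|\geq 1/\sqrt2$ versus $|y_0|\geq 1/\sqrt2$ cannot rescue this, since the failure occurs precisely at and near the boundary between the two cases.

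The paper's proof avoids fixing a product rectangle in advance. After reducing (without loss of generality) to $-\pi/2\leq\theta\leq\pi/2$, it works inside the parallelogram $\Pepsilon\seq\Repsilon$ bounded by the lines $\vec u\cdot\vec z=1-\epsilon^2/4$ and $\vec u\cdot\vec z=1-\epsilon^2/2$ and by $y=\ymin$, $y=\ymax$; the point is that every \emph{horizontal slice} of $\Pepsilon$ has length at least $\epsilon^2/4$ no matter how the sliver is tilted. It first finds $\beta\in\Z[\sqrt2]$ with $\betahat$ in one of $n$ disjoint subintervals of $[\ymin,\ymax]$ of length $\epsilon^2/8$, using plain Lemma~\ref{lem-interval} (no parity control is needed for $c$), and only then finds $\alpha$ with $\alphahat$ in the slice $[x_0,x_1]$ of $\Pepsilon$ at the height $\betahat$ actually obtained, invoking Corollary~\ref{cor-interval-even-odd} once to choose the parity of $a$ opposite to that of $c$. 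This adaptive, two-stage choice is the idea your proposal is missing. A secondary issue: even where your geometry would be fine, using Corollary~\ref{cor-interval-even-odd} for \emph{both} coordinates costs a factor of $2$ in the required $\delta\Delta$ (hence $+1$ in $k$) compared to the paper's use of the corollary only for $\alpha$ together with the wider slice width $\epsilon^2/4$, so your constants would not come out to $C=\frac52+2\log_2(1+\sqrt2)$ as claimed. One could repair existence asymptotically by placing $O(\epsilon^2)\times O(\epsilon^2)$ boxes in a staircase along the tilted sliver, but not with the stated constant and not with the argument as written.
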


\begin{proof}
  First note that $k\geq C+2\log_2(1/\epsilon)$ implies $2^k\geq
  \frac{4\sqrt2(1+\sqrt2)^2}{\epsilon^2}$. Define
  \begin{equation}
    \delta = \rt{k}\,\frac{\epsilon^2}{8}
    \quad\mbox{and}\quad
    \Delta=\rt{k+1}.
  \end{equation}
  We note that $\delta$ and $\Delta$ satisfy the condition of
  Lemma~\ref{lem-interval}. Indeed,
  \begin{equation}\label{eqn-delta-new}
    \delta\Delta =
    2^k\,\frac{\sqrt2\,\epsilon^2}{8}
    \geq \frac{8(1+\sqrt2)^2}{\epsilon^2}\,\frac{\epsilon^2}{8}
    = (1+\sqrt2)^2.
  \end{equation}
  In the following, we will assume, for convenience, that
  $-\frac{\pi}{2}\leq\theta\leq\frac{\pi}{2}$. This assumption is
  without loss of generality, because otherwise, we may simply rotate
  the $\epsilon$-region by multiples of $90^{\circ}$ without changing
  the substance of the argument.

  Now consider the line $\vec u\cdot\vec z = 1 - \epsilon^2/4$. It
  intersects the unit circle in two points at $y$-coordinates $\ymin$
  and $\ymax$, with $\ymax-\ymin\geq\frac{\epsilon}{\sqrt2}$, as shown in
  this figure:
  \[
  \begin{tikzpicture}[scale=3]
    \clip (-0.2,-0.8) rectangle (2.4,1.3);
    \fill[fill=blue!10, rotate=30] (cos 50, sin 50) -- (cos 50, -sin 50) arc (-50:50:1) -- cycle;
    \fill[color=red!10,rotate=30] (0.8213938,-0.57036148) --
    (0.8213938,0.57036148) -- (cos 50,0.57036148+0.10311833) -- (cos 50,
    -0.57036148+0.10311833) -- cycle;
    \draw[color=gray,rotate=30] (0.8213938, -0.57036148) -- (1.3, -0.57036148);
    \draw[color=gray,rotate=30] (0.8213938, 0.57036148) -- (1.3, 0.57036148);
    \path[color=gray] (-.3,.4) node {$\Disk$};
    % \draw[color=gray,->] (-1.2,0) -- (1.4,0);
    % \draw[color=gray,->] (0,-1.1) -- (0,1.2);
    % \path[color=gray] (0,1) node[above left] {$i$};
    % \path[color=gray] (1,0) node[below=6pt,right=-2pt] {$1$};
    \draw[color=gray] (0,0) circle (1);
    \draw[color=gray,rotate=30] (0.4, sin 50) -- (cos 50, sin 50);
    \draw[color=gray,rotate=30] (0.4, -sin 50) -- (cos 50, -sin 50);
    \draw[color=gray,rotate=30] (0.8213938, -0.57036148) -- (0.8213938,-1.0);
    \draw[color=gray,rotate=30] (1,sin 50) -- (1,-1.3);
    \draw[color=gray,rotate=30] (cos 50, sin 50) -- (cos 50, -1.3);
    \draw[draw=gray,rotate=30] (0.8213938, -0.57036148) --
    (1.4,-0.57036148-1.4*tan 30+0.8213938*tan 30) node[right]{$\ymin$};
    \draw[draw=gray,rotate=30] (0.8213938, 0.57036148) --
    (1.4+2*0.57036148*sin 30*cos 30,0.57036148-1.4*tan 30-2*0.57036148*sin
    30*cos 30*tan 30+0.8213938*tan 30) node[right]{$\ymax$};
    \draw[draw=gray] (0.4261671,0.9046444) -- (0.4261671,1.1);
    \draw[draw=gray] (0.2199304,0.9046444) -- (0.2199304,1.1);
    \draw[<->] (0.4261671,1.1) -- node[above] {${\geq}\frac{\epsilon^2}{4}$} (0.2199304,1.1);
    \draw[<->,rotate=30]
    (1.3+2*0.57036148*sin 30*cos 30,0.57036148-1.3*tan 30-2*0.57036148*sin
    30*cos 30*tan 30+0.8213938*tan 30) -- node[right] {$> \frac{\epsilon}{\sqrt2}$}
    (1.3,-0.57036148-1.3*tan 30+0.8213938*tan 30);
    \draw[rotate=30] (cos 50, -sin 50) arc (-50:50:1);
    \draw[->, rotate=30] (0,0) -- (1,0) node[right] {$\vec z$};
    \draw[rotate=30] (cos 50, sin 50) -- (cos 50, -sin 50);
    \draw[<->, rotate=30] (cos 50,-1.2) -- (1,-1.2);
    \draw[rotate=30] (0.9, -1.2) node[below] {$\frac{\epsilon^2}{2}$};
    \draw[<->, rotate=30] (cos 50,-0.9) -- (0.8213938,-0.9);
    \draw[rotate=30] (0.8, -0.9) node[below] {$\frac{\epsilon^2}{4}$};
    \draw[<->, rotate=30] (0.8213938,-0.9) -- (1,-0.9);
    \draw[rotate=30] (0.98, -0.9) node[below] {$\frac{\epsilon^2}{4}$};
    \draw[<->, rotate=30] (0.5, sin 50) -- (0.5, -sin 50)
    node[above=24pt,left=14pt] {$> \sqrt2\,\epsilon$};
    \draw[color=red,thick,rotate=30] (cos 50, -0.57036148+0.10311833) --
    (0.8213938,-0.57036148) -- (0.8213938,0.57036148) -- (cos
    50,0.57036148+0.10311833) -- cycle;
    \draw[<->,rotate=30] (1.2, -0.57036148) -- node[right] {$>\epsilon$} (1.2, 0.57036148);
    % \path[rotate=30] (0.8,0.24) node {$\Repsilon$};
    \path[rotate=30] (0.74,0.15) node {$\Pepsilon$};
  \end{tikzpicture}
  \]
  Consider the parallelogram $\Pepsilon$ that is bounded by the lines
  $\vec u\cdot\vec z = 1 - \epsilon^2/4$ and $\vec u\cdot\vec z = 1 -
  \epsilon^2/2$, and by the horizontal lines at $y=\ymin$ and
  $y=\ymax$. As illustrated in the above figure, $\Pepsilon$ is
  entirely contained within the $\epsilon$-region. We will select
  candidates from within $\Pepsilon$.

  Let $n=\floor{\frac{4\sqrt2}{\epsilon}}$, and define $y_j = \ymin +
  \frac{j}{n}(\ymax-\ymin)$ for $j=0,\ldots,n$, so that $\ymin = y_0 <
  y_1 < \ldots < y_n = \ymax$. We note that
  \begin{equation}
    y_{j+1} - y_j = \frac{\ymax-\ymin}{n} >
    \frac{\epsilon}{\sqrt2}\,\frac{\epsilon}{4\sqrt2}
    = \frac{\epsilon^2}{8}.
  \end{equation}
  Let $I_j=[y_j,y_j+\frac{\epsilon^2}{8}]$ for $j=0,\ldots,n-1$. Then
  $I_0,\ldots,I_{n-1}$ are non-overlapping closed subintervals of
  $[\ymin,\ymax]$ of size $\frac{\epsilon^2}{8}$.  For each
  $j=0,\ldots,n-1$, we will find a candidate $\uhat\in \Pepsilon\cap
  (\R\times I_j)$ as follows.

  First, use Lemma~\ref{lem-interval} to find $\beta\in\Z[\sqrt2]$
  such that $\beta\in[\rt{k} y_j,\rt{k}
  (y_j+\frac{\epsilon^2}{8})]$ and
  $\beta\bul\in[-\rt{k-1},\rt{k-1}]$. Note that these
  two intervals are of size $\delta$ and $\Delta$, respectively, so
  the use of Lemma~\ref{lem-interval} is justified by
  (\ref{eqn-delta-new}).  Let $\betahat = \beta / \rt{k} \in I_j$.

  Because $\betahat\in[\ymin,\ymax]$, the line $y=\betahat$ intersects the
  parallelogram $\Pepsilon$, as shown in the figure below. Let
  $x_0=\min\s{x\mid (x,\betahat)\in \Pepsilon}$ and $x_1=\max\s{x\mid
    (x,\betahat)\in \Pepsilon}$, and note that
  $x_1-x_0\geq\frac{\epsilon^2}{4}$.
  \[
  \begin{tikzpicture}[scale=3]
    \clip (-0,-0.5) rectangle (2,1.3);
    \fill[fill=blue!10, rotate=30] (cos 50, sin 50) -- (cos 50, -sin 50) arc (-50:50:1) -- cycle;
    \fill[color=red!10,rotate=30] (0.8213938,-0.57036148) --
    (0.8213938,0.57036148) -- (cos 50,0.57036148+0.10311833) -- (cos 50,
    -0.57036148+0.10311833) -- cycle;
    \draw[color=gray] (0,0) circle (1);
    \draw[draw=gray,rotate=30] (0.8213938, -0.57036148) --
    (1.1,-0.57036148-1.1*tan 30+0.8213938*tan 30) node[right]{$\ymin$};
    \draw[draw=gray,rotate=30] (0.8213938, 0.57036148) --
    (1.1+2*0.57036148*sin 30*cos 30,0.57036148-1.1*tan 30-2*0.57036148*sin
    30*cos 30*tan 30+0.8213938*tan 30) node[right]{$\ymax$};
    \draw[draw=gray,rotate=30] (cos 50, -0.1+0.10311833) --
    (1.1-0.1*sin 30*cos 30+0.57036148*sin 30*cos 30,-0.1-1.1*tan 30+0.1*sin
    30*cos 30*tan 30-0.57036148*sin
    30*cos 30*tan 30+0.8213938*tan 30) node[right]{$\betahat$};
    \draw[draw=gray] (0.5551112,0.3240943) -- (0.5551112,-0.2) node[below]{$x_0$};
    \draw[draw=gray] (0.7613478,0.3240943) -- (0.7613478,-0.2) node[below]{$x_1$};
    \draw[draw=gray] (0.4261671,0.9046444) -- (0.4261671,1.1);
    \draw[draw=gray] (0.2199304,0.9046444) -- (0.2199304,1.1);
    \draw[<->] (0.4261671,1.1) -- node[above] {${\geq}\frac{\epsilon^2}{4}$} (0.2199304,1.1);
    \draw[rotate=30] (cos 50, -sin 50) arc (-50:50:1);
    \draw[rotate=30] (cos 50, sin 50) -- (cos 50, -sin 50);
    \draw[color=red,thick,rotate=30] (cos 50, -0.57036148+0.10311833) --
    (0.8213938,-0.57036148) -- (0.8213938,0.57036148) -- (cos
    50,0.57036148+0.10311833) -- cycle;
    \path[rotate=30] (0.74,0.15) node {$\Pepsilon$};
    \fill (0.5551112,0.3240943) circle (0.02);
    \fill (0.7613478,0.3240943) circle (0.02);
  \end{tikzpicture}
  \]
  Now we can use Corollary~\ref{cor-interval-even-odd} to find
  $\alpha\in\Z[\sqrt2]$ such that
  $\alpha\in[\rt{k}x_0,\rt{k}(x_0+\frac{\epsilon^2}{4})]$
  and $\alpha\bul\in[-\rt{k-1},\rt{k-1}]$. Note that
  these two intervals are of size $2\delta$ and $\Delta$, respectively,
  so that the use of Corollary~\ref{cor-interval-even-odd} is again
  justified by (\ref{eqn-delta-new}). Furthermore, writing
  $\alpha=a+b\sqrt2$ and $\beta=c+d\sqrt2$,
  Corollary~\ref{cor-interval-even-odd} permits us to choose $a$ odd
  if $c$ is even, or vice versa.

  Let $\alphahat=\alpha/\rt{k}$, and let $\uhat=\alphahat+\betahat i$.  It
  is now trivial to show that $\uhat$ is a candidate. Indeed, since
  $\alphahat\in[x_0,x_1]$, we have that $\uhat\in \Pepsilon\seq
  \Repsilon$ by construction. Also, note that, by construction,
  $({\alphahat}\bul, {\betahat}\bul) \in
  [-\frac{1}{\sqrt2},\frac{1}{\sqrt2}]^2 \seq \Disk$, so that
  ${\uhat}\bul\in\Disk$. Finally, we already remarked that $a+c$ is odd.

  Since we have found a distinct candidate for each
  $j\in\s{0,\ldots,n-1}$, there exist at least $n$ candidates;
  moreover, the construction of the $j$th candidate is clearly algorithmic.
\end{proof}

%----------------------------------------------------------------------
\section{The main algorithm}

%----------------------------------------------------------------------
\subsection{Approximating a $z$-rotation}

We are now ready to put together the results of earlier sections to
obtain an algorithm for approximating $R_z(\theta)$ up to $\epsilon$,
using only gates from the single-qubit Clifford+$T$ group.

\begin{algorithm}\label{alg-rotation}
  Inputs: $0<\epsilon\leq\frac12$ and $\theta\in\R$. Let
  $k=\ceil{C+2\log_2(1/\epsilon)}$, where
  $C=\frac{5}{2}+2\log_2(1+\sqrt2)\approx 5.04$, and let
  $n=\floor{\frac{4\sqrt2}{\epsilon}}$. Use
  Theorem~\ref{thm-candidate-selection} to generate random candidates
  $\uhat\in\rtt{k}\Z[\omega]$. For each candidate
  $\uhat=u/\rt{k}$, let $\xi=2^k-u\da u\in\Z[\sqrt2]$ and attempt to
  use the method of Theorem~\ref{thm-dioph} to find $t\in\Z[\omega]$
  with $t\da t = \xi$. If this fails, continue with the next candidate. If it
  succeeds, the operator
  \[   U = \rtt{k}\zmatrix{cc}{u & -t\da \\ t & u\da}
  \]
  is the desired $\epsilon$-approximation of $R_z(\theta)$. Use the
  Kliuchnikov-Maslov-Mosca exact synthesis algorithm
  {\cite{Kliuchnikov-etal}} to convert $U$ to a sequence of
  Clifford+$T$ gates; optionally, reduce the sequence of gates to
  Matsumoto-Amano normal form {\cite{MA08}}. Output: the sequence of
  gates.
\end{algorithm}

\begin{remark}
  It is not necessary to limit the generation of candidates to the
  particular set of $n$ candidates identified in
  Theorem~\ref{thm-candidate-selection}. Instead, one can simply
  generate candidates at random until a solution is found.
\end{remark}

\begin{remark}
  Although the method of Theorem~\ref{thm-dioph} will usually only
  work when $p=\xi\bul\xi$ is prime, Algorithm~\ref{alg-rotation}
  requires no explicit primality test. Instead, one can simply perform
  the method of Theorem~\ref{thm-dioph} under the optimistic assumption
  that $p$ is prime. In the worst case, this may yield a $t$ that is
  not a solution, but this can be easily checked after the fact. Some
  slight care is needed to ensure that the probabilistic algorithm
  from Remark~\ref{rem-h}, for finding a square root $h$ of $-1$
  modulo $p$, does not get into an infinite loop when $p$ is not
  prime. For this, it is sufficient to limit the number of iterations
  of this algorithm to some small number, say 1 or 2. If $p$ is indeed
  prime, this will still yield $h$ with high enough probability; in
  the worst case, some prime candidates will be unnecessarily
  rejected.
\end{remark}

%----------------------------------------------------------------------
\subsection{Approximating arbitrary gates}

To approximate an arbitrary gate $U\in\SU(2)$, first decompose it via
Euler angles as $U=R_z(\beta)R_x(\gamma)R_z(\delta) =
R_z(\beta)\,H\,R_z(\gamma)\,H\,R_z(\delta)$; each rotation can then be
approximated separately, say up to $\epsilon/3$.

\begin{remark}
  If one includes, as we did, only global phases of the form
  $\omega^j$ in the Clifford group, it is of course not possible to
  approximate unitary operators of arbitrary determinant, using
  only single-qubit Clifford+$T$ gates.  Since the determinant of
  every Clifford+$T$ operator is a power of $\omega$, an operator
  $U\in\U(2)$ can be approximated to arbitrary $\epsilon$ if and only
  if $\det U$ is also a power of $\omega$. Otherwise, $U$ can only be
  approximated up to a global phase. If we include arbitrary global
  phases in the Clifford group, then of course all operators can be
  approximated.
\end{remark}

%----------------------------------------------------------------------
\section{Complexity analysis}\label{sec-complexity}

%----------------------------------------------------------------------
\subsection{Gate complexity}

\begin{fact}\label{fact-2k}
  Let $U$ be a single-qubit Clifford+$T$ operator with denominator
  exponent $k>0$, and let $n$ be the minimal $T$-count of $U$. Then
  $2k-3\leq n\leq 2k$.
\end{fact}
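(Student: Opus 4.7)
The upper bound $n\leq 2k$ is precisely the Kliuchnikov-Maslov-Mosca exact synthesis theorem {\cite{Kliuchnikov-etal}}, already invoked in Algorithm~\ref{alg-rotation}: any $U$ whose entries lie in $\rtt{k}\Z[\omega]$ admits a Clifford+$T$ circuit of $T$-count at most $2k$. This direction therefore requires no further argument.

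For the lower bound $n\geq 2k-3$, the plan is to track how the denominator exponent evolves along an arbitrary Clifford+$T$ circuit for $U$. Two elementary observations drive the analysis: since $S$ and $T$ are diagonal with entries in $\Z[\omega]$, multiplication by either of them leaves the denominator exponent unchanged; and since $H=\rtt{1}H'$ with $H'$ having entries in $\Z[\omega]$, multiplication by $H$ can change the denominator exponent by at most $\pm 1$. Starting at the identity (denominator $0$) and ending at $U$ (denominator $k$), the net change is $k$, so any circuit must contain at least $k$ ``denominator-increasing'' Hadamards. Used naively, this only yields $n\geq k-O(1)$, which is too weak.

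To sharpen this to $n\geq 2k-3$, I would invoke the Matsumoto-Amano normal form {\cite{MA08}}: every single-qubit Clifford+$T$ operator has a unique canonical representation as $C\cdot s_m\cdots s_1$ (optionally preceded by a single $T$), where $C$ is a Clifford and each $s_i\in\{HT,SHT\}$. This normal form is known to achieve the minimum $T$-count, so $n=m$ or $n=m+1$. A finite case analysis on the syllable structure then establishes that the denominator exponent of such a word satisfies $k\leq\lceil(m+3)/2\rceil$, equivalently $n\geq 2k-3$.

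The main obstacle is the induction step showing that each pair of consecutive syllables increases the denominator exponent by at most $1$; the additive constant $-3$ absorbs the boundary effects of the final Clifford $C$ and the optional leading $T$. If one wished to avoid the full MA machinery, one could instead analyze the residues modulo $\sqrt{2}$ of the columns of $\rt{k}U$ and show directly that no single $H$-application on a ``freshly $T$-twisted'' residue pattern can reduce the denominator, so that two $T$-gates are needed on average per unit of denominator increase; but the MA route is cleaner.
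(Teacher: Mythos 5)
Your proposal takes essentially the same route as the paper: the paper's entire proof is ``by induction on the Matsumoto-Amano normal form of $U$,'' deferring the syllable-by-syllable tracking of the denominator exponent to Theorem~7.10 of {\cite{ma-remarks}}, which is exactly the induction you sketch for $n\geq 2k-3$, and the $n\leq 2k$ direction is the same exact-synthesis bound the paper itself quotes from {\cite{Kliuchnikov-etal}}. Your outline is correct (your per-pair-of-syllables invariant needs the boundary slack you already note, since the first two syllables can each raise the exponent), with the detailed case analysis residing in the cited reference, just as in the paper.
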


\begin{proof}
  By induction on the Matsumoto-Amano normal form of $U$; see
  Theorem~7.10 of {\cite{ma-remarks}}.
\end{proof}

Since the matrix $U$ constructed by Algorithm~\ref{alg-rotation} has
denominator exponent $k$, it follows by Fact~\ref{fact-2k} that the
approximation of a $z$-rotation uses $T$-count at most $2k$, or
$K+4\log_2(1/\epsilon)$, where $K\approx 10.09$. The approximation of
an arbitrary element of $\SU(2)$ requires three $z$-rotations, so the
$T$-count is $K+12\log_2(1/\epsilon)$, where $K\approx 30.26$.

\begin{remark}
  Rotations about other ``easy'' axes, such as $x$-rotations or
  $y$-rotations, can of course also be approximated with the same
  $T$-count as a $z$-rotation, as they differ from a $z$-rotation only
  by Clifford operators. More generally, rotations of the form
  $VR_z(\theta)V\da$, where $V$ is a fixed Clifford+$T$-operator, can
  be approximated with $T$-count $K+4\log_2(1/\epsilon)$, where $K$ is
  a constant depending only on $V$.
\end{remark}

\void{
\begin{remark}
  We have used $T$-count as a convenient measure for circuit size for
  three reasons. First, the fault-tolerant implementation of $T$-gates
  is far more costly than that of Clifford gates, so that $T$-count
  tends to dominate the overall implementation cost. Second, in counting
  $T$-gates, we have skirted the question for how exactly Clifford gates
  should be counted, e.g., whether consecutive Clifford gates should be
  regarded as a single gate. Third, by defining $T_z=T$, $T_x=HTH$, and
  $T_y=SHTHS\da$, every Clifford+$T$ circuit can be written in the form
  \begin{equation}
    T_1T_2\ldots T_nC,
  \end{equation}
  where $T_1,\ldots,T_n\in\s{T_x,T_y,T_z}$. Since $T_x$ and $T_y$ are
  Clifford-conjugates of $T$, they are presumably no easier or harder
  to perform; in other words, except for a single Clifford operator at
  the end, all gates are essentially $T$-gates.
\end{remark}
}

%----------------------------------------------------------------------
\subsection{Time complexity}

Most parts of the algorithm are computationally straightforward; for
example, the generation of candidates, and the various ring operations
required for Theorem~\ref{thm-dioph}, can all be done with integers of
length $O(k)$. The arithmetic operations, such as addition,
multiplication, division, etc., require no more than $O(k^2)$
elementary steps each, and there are $O(k)$ arithmetic operations to
perform.

The dominant complexity question is how many candidates must be
generated before a solution is found, and indeed, whether a solution
will be found at all. Here, we must make an assumption about the
distribution of primes:

\begin{hypothesis}\label{hyp-random}
  For a randomly chosen candidate, the probability that $p=\xi\bul\xi$
  is prime is asymptotically no smaller than for general odd numbers
  of comparable size.
\end{hypothesis}

\noindent
We note that for a candidate $\uhat=u/\rt{k}$, we have
\begin{equation}
  {\uhat}\da \uhat\geq \left(1-\frac{\epsilon^2}{2}\right)^2 \geq 1-\epsilon^2,
\end{equation}
and thus
\begin{equation}
  \xi ~=~ 2^k-u\da u ~=~ 2^k(1-{\uhat}\da \uhat)
  ~\leq~ 2^k\epsilon^2 ~\leq~
  \frac{4\sqrt2(1+\sqrt2)^2}{\epsilon^2}\,\epsilon^2
  = 4\sqrt2(1+\sqrt2)^2 \leq 33.  
\end{equation}
Also, $\xi\bul = 2^k-u\bulda u\bul \leq 2^k$. It follows that
$p=\xi\bul\xi \leq 33\cdot 2^k$. By the prime number theorem, a
randomly chosen odd $p$ in this range has probability
\begin{equation}\label{eqn-probability}
  P\approx\frac{2}{\ln (33\cdot 2^k)} = \Omega(\frac{1}{\,k\,})
\end{equation}
of being prime; by Hypothesis~\ref{hyp-random}, at least the same
probability holds for a randomly chosen candidate. On the other hand,
there are $n=\floor{\frac{4\sqrt2}{\epsilon}} = O(\rt{k})$
candidates available, so asymptotically, a prime will be found with
certainty. The expected number of top-level iterations of
Algorithm~\ref{alg-rotation} until a solution is found is $O(k)$.
Therefore, under Hypothesis~\ref{hyp-random}, the overall runtime of
the algorithm is no more than $O(k^4) = O(\log^4(1/\epsilon))$. A
sharper bound could probably be derived by a more sophisticated analysis.

We also note that, due to its randomized nature, and because each
candidate is chosen independently, the algorithm can easily be
parallelized. 

%----------------------------------------------------------------------
\subsection{Seeding}

While the above discussion concerns the asymptotic case, one may
wonder whether for {\em particular} values of $\epsilon$ and $\theta$
it may happen, by unlucky coincidence, that none of the available
candidates are prime. In practice, this never seems to be the case, as
primes are always found quite quickly. However, in theory, we may
avoid this situation by the method of {\em seeding}: Instead of
approximating $R_z(\theta)$ directly, we may choose a random angle
(``seed'') $\phi$, approximate both $R_z(\phi)$ and $R_z(\theta-\phi)$
to within $\epsilon/2$, and finally compute $R_z(\theta)$ as
$R_z(\phi)R_z(\theta-\phi)$. If a particular seed seems to yield no
prime candidates in a given number of iterations, one can simply try a
different seed.

The seeding method for $z$-rotations has the disadvantage that it
approximately doubles the $T$-count, from $K+4\log_2(1/\epsilon)$ to
$K+8\log_2(1/\epsilon)$.

In the approximation of an arbitrary gate $U\in\SU(2)$, a different
seeding method can be used, which only increases the asymptotic
$T$-count by a small additive constant. Namely, let the seed be some
randomly chosen Clifford+$T$ operator $V$ of fixed and relatively
small $T$-count.  We can then use the algorithm to approximate
$UV\inv$, and multiply the final result by $V$. The gate complexity,
in this case, is unchanged at $K+12\log_2(1/\epsilon)$.

It must be emphasized, once again, that seeding is only of theoretical
interest, to ensure expected termination of the algorithm under a
hypothetical worst-case scenario. In practice, it does not seem to be
required.

%----------------------------------------------------------------------
\section{Lower bounds}\label{sec-lower-bounds}

As mentioned in the introduction, $K+3\log_2(1/\epsilon)$ is an easy
lower bound for the $T$-count of a Clifford+$T$ approximation of some
arbitrary operator in $\SU(2)$ up to $\epsilon$.
Specifically, Matsumoto and Amano showed that there are precisely
$192\cdot(3\cdot 2^n-2)$ distinct single-qubit Clifford+$T$-circuits
of $T$-count at most $n$ {\cite{MA08}}. Since $\SU(2)$ is a
3-dimensional manifold, it requires $\Omega(1/\epsilon^3)$
epsilon-balls to cover. The resulting inequality
\begin{equation}\label{eqn-ma-bound}
  192\cdot(3\cdot 2^n-2)\geq \frac{c}{\epsilon^3},
\end{equation}
immediately implies 
\begin{equation}\label{eqn-ma-bound2}
  n\geq K+3\log_2(\frac{1}{\,\epsilon\,}).
\end{equation}
This means that there is some universal constant $K$ such that for
every $\epsilon$, there exists some operator $U\in\SU(2)$ that cannot
be approximated up to $\epsilon$ with $T$-count less than
$K+3\log_2(1/\epsilon)$. In fact, (\ref{eqn-ma-bound}) implies that
the proportion (in the Haar measure, say) of operators that can be
approximated with $T$-count $K+3\log_2(1/\epsilon)-k$ is at most
$O(\frac{1}{2^k})$, so that ``most'' operators require $T$-counts that
are close to or above the lower bound.

However, like all lower bounds, this must be understood with a grain
of salt. For example, the set of $z$-rotations only forms a
$1$-dimensional submanifold of $\SU(2)$, so it is not a priori clear
that the lower bound of $K+3\log_2(1/\epsilon)$ also applies to
$z$-rotations.

We now prove a sharper lower bound, which applies to $z$-rotations in
particular. 

\begin{theorem}\label{thm-lower-bound}
  Let $K'=-9$. Then for all $\epsilon>0$, there exists an angle
  $\theta$ such that every $\epsilon$-approximation of $R_z(\theta)$
  by a product of Clifford+$T$ operators requires $T$-count at least
  $K'+4\log_2(1/\epsilon)$.
\end{theorem}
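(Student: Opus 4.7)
The plan is to prove this lower bound by a pigeonhole / counting argument: if every $\theta \in [0, 4\pi)$ admits a Clifford+$T$ $\epsilon$-approximation of $T$-count at most $n$, then the set of such approximating operators must be numerous enough to cover the $z$-rotation circle. Counting the available approximators in terms of $n$ will yield a lower bound on $n$.

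First, I would use Fact~\ref{fact-2k} to reduce to a statement about the denominator: any Clifford+$T$ operator $U \in \SU(2)$ of $T$-count $n$ has denominator exponent $k \leq (n+3)/2$, and hence is of the form $U = \rtt{k}\zmatrix{cc}{u & -t\da \\ t & u\da}$ for some $u, t \in \Zw$ with $u\da u + t\da t = 2^k$. From the $\epsilon$-region analysis of Section~\ref{sec-epsilon-region}, $\|U - R_z(\theta)\| \leq \epsilon$ forces $|t/\rt{k}| \leq \epsilon$, i.e., $|t|^2 \leq \epsilon^2 \cdot 2^k$. Moreover, each such $U$ can be $\epsilon$-close to $R_z(\theta)$ only for $\theta$ in an interval of length at most $4\epsilon$, since $\|R_z(\theta_1) - R_z(\theta_2)\| = 2|\sin((\theta_1-\theta_2)/4)|$ is bounded below by a constant multiple of $|\theta_1 - \theta_2|$ for small differences.

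Second, I would bound from above the number $M$ of pairs $(u, t) \in \Zw^2$ satisfying $u\da u + t\da t = 2^k$ and $|t|^2 \leq \epsilon^2 \cdot 2^k$, summed over $k$ in the allowed range. The key tool is that $\Zw$ embeds into $\C \times \C$ via $t \mapsto (t, t\bul)$ as a rank-$4$ lattice, so a volume estimate for the region carved out by the two norm equations and the constraint on $|t|$ bounds the lattice-point count. If this yields $M \leq C \cdot \epsilon^a \cdot 2^n$, then combining with the covering requirement $M \cdot 4\epsilon \geq 4\pi$ (so that the coverage intervals exhaust $[0, 4\pi)$) gives $2^n \geq \pi/(C\epsilon^{a+1})$, i.e., $n \geq (a+1)\log_2(1/\epsilon) + \log_2(\pi/C)$. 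For the claimed bound $n \geq -9 + 4\log_2(1/\epsilon)$, the counting must yield $a = 3$, and the constant $K' = -9$ is then recovered by tracking numerical constants through the lattice volume estimate.

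The main obstacle, in my view, is precisely the step of obtaining the sharp exponent $a = 3$. A naive volume estimate, treating the two constraints $|t|^2 \leq \epsilon^2 \cdot 2^k$ and $|t\bul|^2 \leq 2^k$ as independent box-bounds on the lattice $\Zw \hookrightarrow \C^2$, gives only $a = 2$ and hence the weaker bound $n \geq 3\log_2(1/\epsilon) + K$. To sharpen this, one must exploit additional arithmetic structure of $\Zw$. A natural ingredient is the constraint $\yN(t) = |t|^2 \cdot |t\bul|^2 \geq 1$ for $t \neq 0$, which forbids $t$'s in the extreme ``corners'' where both factors are small, and which already implies $k \geq \log_2(1/\epsilon)$ for any nonzero off-diagonal $t$. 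Together with an argument that many pairs $(u,t)$ with $t$ small yield the same $\theta$ up to a Clifford multiplication--effectively reducing the useful count by an extra factor of $\epsilon$--one should obtain the $O(\epsilon^3\cdot 2^n)$ bound and close the gap.
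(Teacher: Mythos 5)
Your reduction via Fact~\ref{fact-2k} and the observation that each approximant covers only an arc of length $O(\epsilon)$ are fine, but the heart of your argument --- the counting bound $M = O(\epsilon^{3}\cdot 2^{n})$ for the number of Clifford+$T$ operators of $T$-count at most $n$ lying within $\epsilon$ of some $z$-rotation --- is not established, and in the strength you need it is almost certainly false. The naive estimate you already computed, $M=O(\epsilon^{2}2^{n})$, is what an equidistribution heuristic predicts to be the truth: the $z$-rotations form a codimension-$2$ submanifold of $\SU(2)$, whose $\epsilon$-neighbourhood has Haar measure of order $\epsilon^{2}$, and the paper itself remarks (end of Section~\ref{sec-lower-bounds} and Section~\ref{sec-overclocking}) that for \emph{most} angles one expects approximations of $T$-count about $3\log_2(1/\epsilon)$, which is only consistent with coverage of order $\epsilon^{2}2^{n}\cdot\epsilon$ --- i.e.\ with $a=2$ being sharp. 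Your two proposed sharpenings do not close this gap: the constraint $\yN(t)=|t|^{2}|t\bul|^{2}\geq 1$ only excises a region of negligible volume from the box $\{|t|\leq\epsilon\rt{k},\,|t\bul|\leq\rt{k}\}$ (and $t=0$ is perfectly admissible anyway), and identifying operators ``up to Clifford multiplication'' can only reduce the count by a bounded factor, since the single-qubit Clifford group is finite ($192$ elements with phases), not by a factor of $\epsilon$. More fundamentally, any counting argument that is uniform in $\theta$ cannot do better than the $3\log_2(1/\epsilon)$ bound, because the bad angles for the $4\log_2(1/\epsilon)$ bound form a sparse, $\epsilon$-dependent set.

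The paper's proof is of a different nature: it \emph{chooses} $\theta$ adapted to $\epsilon$, namely $\theta=-2\sin\inv\epsilon$, so that every point $x+iy$ of the $\epsilon$-region satisfies $1-2\epsilon^{2}<x<1$, i.e.\ the region is pinned immediately next to the point $1$ on the unit circle. If $U=\rtt{k}\bigl(\begin{smallmatrix}u&s\\t&v\end{smallmatrix}\bigr)$ is an $\epsilon$-approximation with denominator exponent $k$, writing $\uhat=\rtt{k+1}(\alpha+\beta i)$ with $\alpha,\beta\in\Z[\sqrt2]$ forces $\alpha\in(\rt{k+1}(1-2\epsilon^{2}),\rt{k+1})$ and $\alpha\bul\in[-\rt{k+1},\rt{k+1}]$; since $\gamma=\rt{k+1}\in\Z[\sqrt2]$ is a second, distinct solution of these constraints, Lemma~\ref{lem-interval-lower} (two distinct points of $\Z[\sqrt2]$ in a $\delta\times\Delta$ box force $\delta\Delta\geq 1$) yields $2^{k+3}\epsilon^{2}\geq 1$, hence $k\geq 2\log_2(1/\epsilon)-3$, and Fact~\ref{fact-2k} converts this to $n\geq 4\log_2(1/\epsilon)-9$. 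The idea you are missing is precisely this pairing of an $\epsilon$-dependent worst-case angle with the sharp two-point rigidity of $\Z[\sqrt2]$ under simultaneous control of $\alpha$ and $\alpha\bul$, rather than any global count of approximants.
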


\begin{proof}
  Let $\epsilon$ be given. If $\epsilon\geq 1/2$, then
  $K'+4\log_2(1/\epsilon)$ is negative, so there is nothing to
  show. Assume, therefore, that $\epsilon<1/2$.  Let
  $\phi=\sin\inv\epsilon$, with $0<\phi<\pi/6$. Let $\theta=-2\phi$,
  so that $z = e^{-i\theta/2} = e^{i\phi}$. We note the shape of the
  $\epsilon$-region for $z$:
  \[
  \begin{tikzpicture}[scale=3]
    \draw (cos 70, sin 70) -- (cos 70, -.05) node [below] {$1-2\epsilon^2$};
    \clip (-0.2,-0.1) rectangle (1.41,1.21);
    \fill[fill=blue!10] (cos 68, sin 68) -- (cos 2, sin 2) arc (2:68:1) -- cycle;
    \path[color=gray] (-.3,.4) node {$\Disk$};
    \draw[color=gray,->] (-1.2,0) -- (1.4,0);
    \draw[color=gray,->] (0,-1.1) -- (0,1.2);
    \path[color=gray] (0,1) node[above left] {$i$};
    \draw[color=gray] (0,0) circle (1);
    \draw (cos 68, sin 68) -- (cos 2, sin 2) arc (2:68:1) -- cycle;
    \draw[->] (0,0) -- (cos 35, sin 35) node[right] {$z = e^{i\phi}$};
    \draw[->] (0,0) -- (cos 0, sin 0) node[below right] {$1$};
    \draw[->] (0,0) -- (cos 70, sin 70) node[above right] {$z^2 = e^{2i\phi}$};
    \path (0.83,0.4) node {$\Repsilon$};
  \end{tikzpicture}
  \]
  Identifying complex numbers with vectors in $\R^2$ as before, we
  estimate the dot product 
  \begin{equation}
    \vec 1\cdot \vec z = \cos \phi = \sqrt{1-\epsilon^2}
    < \sqrt{1-\epsilon^2+\frac{\epsilon^4}{4}}
    = 1-\frac{\epsilon^2}{2}.
  \end{equation}
  It follows that neither $1$ nor $z^2$ is in the $\epsilon$-region.
  We further note that the $x$-coordinate of $z^2$ is $\cos 2\phi =
  1-2\sin^2\phi = 1-2\epsilon^2$. So for all $x+iy\in\Repsilon$, we
  have
  \begin{equation}\label{eqn-bound-reps}
    1-2\epsilon^2 < x < 1.
  \end{equation}
  Now consider some $\epsilon$-approximation 
  \begin{equation}\label{eqn-u-lower-bound}
    U = \rtt{k}\zmatrix{cc}{u & s \\ t & v}
  \end{equation}
  of $R_z(\theta)$ with denominator exponent $k$, i.e., where
  $u,t,s,v\in\Z[\omega]$. From $\norm{U-R_z(\theta)}\leq \epsilon$, we
  get $|\uhat-z|^2+|\that|^2\leq\epsilon^2$, where $\uhat=u/\rt{k}$
  and $\that=t/\rt{k}$. By the same reasoning as in
  Section~\ref{sec-epsilon-region}, it follows that
  $\uhat$ is in the $\epsilon$-region. Moreover since $U\bul$ is
  also unitary, we have ${\uhat}\bul\in\Disk$. We can write
  \[ \uhat=\rtt{k}(a\omega^3+b\omega^2+c\omega+d) = 
  \rtt{k+1}(d\sqrt2 + c-a) + (b\sqrt2 + c+a)i
  = \rtt{k+1}(\alpha + \beta i),
  \]
  where $\alpha,\beta\in\Z[\sqrt2]$.
  Since $\uhat\in\Repsilon$, from (\ref{eqn-bound-reps}), we have
  \begin{equation}
    \rt{k+1}(1-2\epsilon^2) < \alpha < \rt{k+1}.
  \end{equation}
  Also, from ${\uhat}\bul\in\Disk$, we have 
  \begin{equation}
    \alpha\bul \in [-\rt{k+1},\rt{k+1}].
  \end{equation}
  It follows that the constraints
  \begin{equation}
    (\gamma,\gamma\bul)\in[\rt{k+1}(1-2\epsilon^2),
    \rt{k+1}]\times [-\rt{k+1},\rt{k+1}]
  \end{equation}
  have at least two different solutions in $\Z[\sqrt2]$, namely
  $\gamma=\alpha$ and $\gamma=\rt{k+1}$. Setting
  $\delta=\rt{k+1}\,2\epsilon^2$ and
  $\Delta=2\rt{k+1}$, we have by 
  Lemma~\ref{lem-interval-lower}:
  \begin{equation}
    1 \leq \delta\Delta = 2^{k+3}\epsilon^2,
  \end{equation}
  or equivalently,
  \begin{equation}\label{eqn-lowerbound-k}
    k \geq \log_2(\frac{1}{8\epsilon^2})  = -3 +2\log_2(\frac{1}{\,\epsilon\,}).
  \end{equation}
  On the other hand, it follows from Fact~\ref{fact-2k} that every
  single-qubit Clifford+$T$ operator $U$ of $T$-count $n$ can be
  written with denominator exponent $k$, where
  \begin{equation}\label{eqn-lowerbound-n}
    k \leq \frac{n+3}{2}.
  \end{equation}
  Putting together (\ref{eqn-lowerbound-k}) and
  (\ref{eqn-lowerbound-n}), we get
  \begin{equation}\label{eqn-lowerbound}
    n \geq -9 + 4\log_2(\frac{1}{\,\epsilon\,}),
  \end{equation}
  which is the desired result.
\end{proof}

\begin{remark}
  Unlike the lower bound (\ref{eqn-ma-bound2}), which applies to {\em
    typical} operators, the lower bound (\ref{eqn-lowerbound}) only
  applies to carefully chosen worst-case operators. It is plausible
  that for any fixed $\epsilon$, approximations of order
  (\ref{eqn-ma-bound2}) can be achieved for most angles
  $\theta$. Moreover, it is also plausible that for any fixed
  $\theta$, approximations of order (\ref{eqn-ma-bound2}) can be
  achieved as $\epsilon\rightarrow 0$. A variation of
  Algorithm~\ref{alg-rotation} that could potentially achieve this is
  sketched in Section~\ref{sec-overclocking}, but the details are
  left for future work.
\end{remark}

%----------------------------------------------------------------------
\section{Overclocking}\label{sec-overclocking}

It is in the nature of Algorithm~\ref{alg-rotation} that $k$ is chosen
at the very beginning; the final sequence of gates will have $T$-count
very close to $2k$. This behavior is different from that of
search-based algorithms, which would typically try shorter
decompositions first, and then gradually move to longer ones.

In practice, many of the approximations and estimates in the above
proofs are conservative; it is often possible to choose a smaller $k$
than that prescribed by the algorithm, and still obtain a
decomposition. We refer to this technique as ``overclocking'' the
algorithm, by analogy with the practice of running microprocessors at
higher clock speeds than they were designed for, and hoping for the
best.

Let us first consider the effect of overclocking by a small additive
constant, i.e., decreasing the value of $C$ in
Algorithm~\ref{alg-rotation} by some fixed amount, independently of
$\epsilon$. This decreases the width of the $\epsilon$-region by a
fixed multiplicative factor, which means that some choices of $\beta$
(in Theorem~\ref{thm-candidate-selection}) will no longer yield a
successful solution for $\alpha$. However, the {\em proportion} of
$\beta$ for which an $\alpha$ can be found will be asymptotically
independent of $\epsilon$, so that the algorithm will still yield a
solution, albeit with a longer running time. In summary, the
asymptotic effect of overclocking $C$ by a fixed additive amount is to
increase the runtime by a fixed factor, while neither affecting the
validity nor the big-O time complexity of the algorithm.

A more interesting question is whether the algorithm can be modified
to allow overclocking by a {\em multiplicative} factor; in the ideal
case, one might even hope to achieve approximations with $T$-count
$K+3\log_2(1/\epsilon)$, for most angles $\theta$. As it is currently
stated, the algorithm will not work well for multiplicative
overclocking, because the $\epsilon$-region will then be much too thin
to expect to find candidates for a reasonable fraction of
$y$-coordinates $\beta$. However, the {\em area} of the
$\epsilon$-region scales as $\epsilon^3$, and since the density of
grid points scales as $4^k \sim 1/\epsilon^3$, one still expects, on
average, to find a constant number of candidates in the
$\epsilon$-region. This assumes that the angle $\theta$ is
sufficiently general, so that the $\epsilon$-region is not aligned
with the orientation of the underlying $k$-grid, so that the lower
bound of Theorem~\ref{thm-lower-bound} can be avoided.  It would be an
interesting optimization to refine Algorithm~\ref{alg-rotation} to be
able to take advantage of such sparsely populated $\epsilon$-regions,
but the details are left to future work.

%----------------------------------------------------------------------
\section{Experimental results}

%......................................................................
\begin{table}
\[
\begin{array}{|l|r|r|l|r|r|c|}
  \hline
  \multicolumn{1}{|c|}{\epsilon}
  & \multicolumn{1}{|c|}{\mbox{$k$}}
  & \multicolumn{1}{|c|}{\mbox{$T$-count}}
  & \multicolumn{1}{|c|}{\mbox{Actual error}}
  & \multicolumn{1}{|c|}{\mbox{Runtime}}
  & \multicolumn{1}{|c|}{\mbox{Candidates}}
  & \multicolumn{1}{|c|}{\mbox{Time/Candidate}}
  \\\hline
    10^{-10}
  & 72
  % HTHTHTSHTSHTHTSHTSHTHTSHTHTSHTSHTSHTHTSHTHTHTSHTSHTHTHTHTSHTHTHTSHTHTSHTHTSHTHTHTSHTHTHTSHTSHTSHTHTSHTHTHTHTSHTHTSHTHTHTSHTSHTHTHTHTSHTSHTHTSHTHTHTSHTSHTHTSHTSHTHTSHTHTSHTHTSHTSHTHTHTSHTHTHTSHTHTSHTSHTSHTHTHTHTSHTSHTHTHTHTSHTHTHTSHTHTSHTHTHTHTHTSHTHTSHTHTHTSHTHTSHTSHTHTSHTSHTSHTSHTSHTHTHTHTHTSHTHTHTSHTSHTSHTHTHTSHTSHTSHTSHTHTHTHTHTHTHTHTSHTHTHTHTH\omega^7
  % \frac{1}{\sqrt{2}^{72}}\zmatrix{cc}{-22067493351\omega^3-22078644868\omega^2+52098814989\omega+16270802723 & 18093401340\omega^3-18136198811\omega^2+7555056984\omega+7451734762\\7555056984\omega^3-18136198811\omega^2+18093401340\omega-7451734762 & -52098814989\omega^3+22078644868\omega^2+22067493351\omega+16270802723\\}
  & 142
  & 0.90665\cdot 10^{-10}
  & 0.02s
  & 37.80
  & 0.0005s
  \\

  10^{-20}
  & 139
  % THTSHTHTHTSHTSHTHTSHTSHTHTSHTSHTHTSHTHTSHTSHTSHTHTSHTHTHTHTHTSHTSHTSHTHTSHTSHTHTHTSHTHTHTSHTSHTHTHTSHTSHTHTHTSHTSHTHTHTSHTSHTHTHTHTHTSHTHTSHTSHTHTSHTHTSHTSHTSHTHTHTSHTHTHTSHTHTHTSHTHTSHTHTSHTSHTHTHTHTHTSHTHTSHTSHTSHTSHTHTHTSHTHTHTSHTSHTHTHTSHTSHTHTHTHTSHTSHTSHTSHTHTSHTSHTSHTSHTSHTHTSHTHTHTHTHTHTHTSHTSHTSHTHTHTHTSHTSHTHTHTSHTHTSHTSHTHTSHTSHTHTHTSHTHTSHTSHTSHTSHTSHTHTHTHTSHTSHTSHTHTHTHTSHTSHTSHTSHTSHTHTHTSHTSHTSHTSHTSHTHTHTSHTSHTSHTHTSHTHTSHTHTSHTHTHTSHTHTHTHTHTHTHTSHTSHTHTSHTHTSHTHTSHTSHTSHTHTHTHTSHTSHTHTSHTSHTHTHTSHTSHTSHTHTHTHTHTHTSHTSHTHTSHTSHTSHTHTHTHTSHTHTHTHTSHTSHTSHTHTSHTSHTSHTHTHTHTSHTSHTHTSHTHTSHTSHTSHTSHTSHTSHTSHTHTHTHTHTSHTHTSHTHTHTSHTSHTHTHTHTHTHTSHTHTHTHTSHTSHTHTHTHTSHTHTSHTXSS\omega^3
  % \frac{1}{\sqrt{2}^{139}}\zmatrix{cc}{-241499935553134083947\omega^3+175425771894348716975\omega^2-21077245523056640347\omega+678879102742111902534 & 186010720405634492775\omega^3+7467684553080911020\omega^2-196571621180125582158\omega+270526568097519030730\\-196571621180125582158\omega^3+7467684553080911020\omega^2+186010720405634492775\omega-270526568097519030730 & 21077245523056640347\omega^3-175425771894348716975\omega^2+241499935553134083947\omega+678879102742111902534\\}
  & 278
  & 0.84346\cdot 10^{-20}
  & 0.05s
  & 86.60
  & 0.0005s
  \\

  10^{-30}
  & 205
  & 410
  & 0.82984\cdot 10^{-30}
  & 0.07s
  & 97.94
  & 0.0007s
  \\

  10^{-40}
  & 272
  & 542
  & 0.73279\cdot 10^{-40}
  & 0.10s
  & 114.00
  & 0.0009s
  \\

  10^{-50}
  & 338
  & 676
  & 0.83841\cdot 10^{-50}
  & 0.13s
  & 135.66
  & 0.0010s
  \\

  10^{-60}
  & 405
  & 810
  & 0.73964\cdot 10^{-60}
  & 0.21s
  & 213.22
  & 0.0010s
  \\

  10^{-70}
  & 471
  & 942
  & 0.72360\cdot 10^{-70}
  & 0.24s
  & 201.76
  & 0.0012s
  \\

  10^{-80}
  & 538
  & 1076
  & 0.96804\cdot 10^{-80}
  & 0.30s
  & 223.72
  & 0.0013s
  \\

  10^{-90}
  & 604
  & 1208
  & 0.90793\cdot 10^{-90}
  & 0.42s
  & 283.36
  & 0.0015s
  \\

  10^{-100}
  & 670
  & 1338
  & 0.92860\cdot 10^{-100}
  & 0.51s
  & 327.82
  & 0.0016s
  \\

  10^{-200}
  & 1335
  & 2670
  & 0.77785\cdot 10^{-200}
  & 2.48s
  & 526.16
  & 0.0047s
  \\

  10^{-500}
  & 3328
  & 6656
  & 0.71348\cdot 10^{-500}
  & 47.82s
  & 1388.42
  & 0.0344s
  \\

  10^{-1000}
  & 6650
  & 13300
  & 0.80519\cdot 10^{-1000}
  & 504.80s
  & 2873.80
  & 0.1757s
  \\

  \hline
\end{array}
\]
\caption{Experimental results. The operator approximated is
  $R_z(\pi/128)$.  Errors are measured in the operator
  norm. The  runtime is averaged over 50 independent runs of the
  algorithm with the same parameters. The runtime is further
  broken down into average number of candidates tried per run of the
  algorithm, and time spent per candidate.}
\label{tab-results}
\rule{\textwidth}{1pt}
\end{table}
%......................................................................

Algorithm~\ref{alg-rotation} has been implemented in the Haskell
programming language, and is available from {\cite{software}}.
Table~\ref{tab-results} summarizes the results of approximating
$R_z(\pi/128)$ up to various $\epsilon$, using one core of a 3.40GHz
Intel i5-3570 CPU.

We may note that the runtime per candidate increases predictably, and
polynomially, with decreasing $\epsilon$, due to the increasing size
of the integers and real numbers that must be operated upon. The
number of candidates tried per run of the algorithm is approximately
proportional to $k$, which is expected according to
(\ref{eqn-probability}). This lends some empirical evidence to the
validity of Hypothesis~\ref{hyp-random}.

Note that we obtained a decomposition with accuracy
$\epsilon=10^{-100}$, using only $1338$ $T$-gates. By comparison, the
Solovay-Kitaev algorithm with comparable $T$-count achieves less than
$\epsilon=10^{-10}$ {\cite[Table~2]{Kliuchnikov-etal}}.

As an example, here is the decomposition of $R_z(\pi/128)$ up to
$\epsilon=10^{-10}$:
\[ \begin{split}
  \uhat =&~ \rtt{72}(-22067493351\omega^3-22078644868\omega^2+52098814989\omega+16270802723) \\
  \that =&~ \rtt{72}(18093401340\omega^3-18136198811\omega^2+7555056984\omega+7451734762) \\
  U =
  &~ \tt HTHTHTSHTSHTHTSHTSHTHTSHTHTSHTSHTSHTHTSHTHTHTSHTSHTHTHTHTSHTHTHTSHTHTSH\\[-1ex]
  &~ \tt THTSHTHTHTSHTHTHTSHTSHTSHTHTSHTHTHTHTSHTHTSHTHTHTSHTSHTHTHTHTSHTSHTHTSH\\[-1ex]
  &~ \tt THTHTSHTSHTHTSHTSHTHTSHTHTSHTHTSHTSHTHTHTSHTHTHTSHTHTSHTSHTSHTHTHTHTSHT\\[-1ex]
  &~ \tt SHTHTHTHTSHTHTHTSHTHTSHTHTHTHTHTSHTHTSHTHTHTSHTHTSHTSHTHTSHTSHTSHTSHTSH\\[-1.2ex]
  &~ \tt THTHTHTHTSHTHTHTSHTSHTSHTHTHTSHTSHTSHTSHTHTHTHTHTHTHTHTSHTHTHTHTH\omega^7
\end{split}
\]

%----------------------------------------------------------------------
\section{Conclusions}

We have given the most efficient algorithm to date for decomposing an
element of $\SU(2)$ into a product of Clifford+$T$ gates, up to
arbitrarily small $\epsilon$. The algorithm performs well both in
theory and in practice, easily achieving decompositions up to
$\epsilon=10^{-100}$ using $T$-counts of less than $1400$. 

Like the recent ancilla-based algorithm by Kliuchnikov, Maslov, and
Mosca {\cite{KMM-approx}}, our algorithm is based on solving a
Diophantine equation. Although both algorithms achieve the same
optimal asymptotic big-O complexity, the gate decompositions resulting
from our algorithm are shorter (by a constant, but non-negligible
factor) than those that can be achieved by the
Kliuchnikov-Maslov-Mosca approximation algorithm.

%----------------------------------------------------------------------
\section*{Acknowledgements}

Thanks to Eric Van Den Berg and Don LeVine for useful comments on an
earlier draft of this paper.  This research was conducted while the
author was a visiting Research Professor at the Mathematical Sciences
Research Institute in Berkeley, California. This research was
supported by the Natural Sciences and Engineering Research Council of
Canada (NSERC). This research was supported by the Intelligence
Advanced Research Projects Activity (IARPA) via Department of Interior
National Business Center contract number D12PC00527. The U.S.\
Government is authorized to reproduce and distribute reprints for
Governmental purposes notwithstanding any copyright annotation
thereon.  Disclaimer: The views and conclusions contained herein are
those of the authors and should not be interpreted as necessarily
representing the official policies or endorsements, either expressed
or implied, of IARPA, DoI/NBC, or the U.S.\ Government.

%----------------------------------------------------------------------
\bibliographystyle{abbrvunsrt}
\bibliography{newsynth}

%----------------------------------------------------------------------
\end{document}